\documentclass{lmcs}
\pdfoutput=1

% LMCS Layouting Macros
\usepackage{lastpage}
\lmcsdoi{15}{3}{10}
\lmcsheading{}{\pageref{LastPage}}{}{}%
{Dec.~29,~2017}{Aug.~01,~2019}{}

\keywords{proof theory,
inductive definitions,
Brotherston-Simpson conjecture,
cyclic proof,
Martin-Lof's system of inductive definitions,
Henkin models}

\usepackage{myllncs,  mymath,  proof,  latexsym,  amssymb, hyperref}

\usepackage{color}
\usepackage{graphicx}
\usepackage{rotating}

\long\def\Makoto#1{{\small{\color{red} Makoto: {\em #1}}}}
\long\def\Stefano#1{{\small{\color{Green} Stefano: {\em #1}}}}
\long\def\Makoto#1{}
\long\def\Stefano#1{}

\newif\ifpaperlong
   \paperlongtrue

\begin{document}
\sloppy \hbadness=10000 \vbadness=10000

\long\def\Erase#1{}

% macros for LMCS by Makoto

\def\calH{{\mathcal H}}
\def\calU{{\mathcal U}}
\def\calR{{\mathcal R}}
\def\calM{{\mathcal M}}
\def\calD{{\mathcal D}}
\def\Codom{{\rm codom}}
\def\Id{{\rm id}}

% macros for fossacs by Makoto

%\def\s{{\bf S}}
\def\s{{ s}}

\def\CLKIDomega{\CLKID}
\def\Dom{{\rm dom}}
\def\Codom{{\rm codom}}

% instead of amsmath

%\def\implies{\Longrightarrow}
\def\implies{\rightarrow}

% Stefano macros (2016.9)

% Author macros %%%%%%%%%%%%%%%%%%%%%%%%%%%%%%%%%%%%%%%%%%%%%%%%
%
\newcommand{\setA}              { {\mathcal A} }
\newcommand{\setB}              { {\mathcal B} }
\newcommand{\setC}              { {\mathcal C} }
\newcommand{\setD}              { {\mathcal D} }
\newcommand{\setE}              { {\mathcal E} }
\newcommand{\setF}              { {\mathcal F} }
\newcommand{\setG}              { {\mathcal G} }
\newcommand{\setH}              { {\mathcal H} }
\newcommand{\setI}              { {\mathcal I} }
\newcommand{\setL}              { {\mathcal L} }
\newcommand{\setM}              { {\mathcal M} }
\newcommand{\setN}              { {\mathcal N} }
\newcommand{\setO}              { {\mathcal O} }
\newcommand{\setP}              { {\mathcal P} }
\newcommand{\setQ}              { {\mathcal Q} }
\newcommand{\setR}              { {\mathcal R} }
\newcommand{\setS}              { {\mathcal S} }
\newcommand{\setT}              { {\mathcal T} }
\newcommand{\setU}              { {\mathcal U} }
\newcommand{\setW}              { {\mathcal W} }
\newcommand{\setX}              { {\mathcal X} }
\newcommand{\setY}              { {\mathcal Y} }
\newcommand{\setZ}              { {\mathcal Z} }
\newcommand{\Nat}               { {\mathbb{N}}  }
\newcommand{\Npredicate}        { N }

\newcommand{\PA}                { {\tt PA}}
\newcommand{\false}             { {\tt false} }
\newcommand{\true }             { {\tt true } }
\newcommand{\WF}                { {\tt WF} }
\newcommand{\length}            { {\tt l} }
\newcommand{\Subtree}           { {\tt SubT} }
\newcommand{\List}            { {\tt List} }
\newcommand{\FV}              { {\tt FV} }
\newcommand{\Occ}             { {\tt Occ} }
\newcommand{\id}              { {\tt id} }

\newcommand{\comp}            { {\circ} }

\newcommand{\Comp}            { {\circ} }
\newcommand{\dom}             { {\tt dom} }
\newcommand{\cons}            { {\tt cons} }
\newcommand{\Tree}            { {\tt Tree} }
\newcommand{\Lab}             { {\tt Lab} }
\newcommand{\EM}              { {\tt EM} }
\newcommand{\restr}           { {\lceil} }
\newcommand{\Ind}              { {\tt Ind} }
\newcommand{\FOInd}            { {\tt FOInd} }
\newcommand{\IndR}            { {\tt IndR} }
\newcommand{\Bool}            { {\tt Bool} }
\newcommand{\lexicographic}   { {  \prec  } }
\newcommand{\nil}             { {\tt nil} }
\newcommand{\Ch}              { {\tt Ch} }
\newcommand{\Bud}              { {\tt Bud} }
\newcommand{\implied}          { {\Leftarrow} }

%NEW MACROS REQUIRED BY BROTHERSTON: 06/10/2018
\newcommand{\Zeta}             { {\mathbb{Z}} }
\newcommand{\zeroZ}            { {0_{\Zeta}}}
\newcommand{\NatModel}               { {\tt N} }
\newcommand{\ZetaModelMinus}      { {{\tt Z}^{-}} }
\newcommand{\ZetaModelPlus}      { {{\tt Z}^{+}_0} }
\newcommand{\universe}            [1] {   { |\!\!| #1 |\!\!| } }
%END: NEW MACROS REQUIRED BY BROTHERSTON: 06/10/2018

\newcommand{\range}                    { {\tt range} }
\newcommand{\domain}             [2] {#2(#1)}

\newcommand{\LJID}             { {\tt LJID} }
\newcommand{\LKID}             { {\tt LKID} }
\newcommand{\CLKID}            { {\tt CLKID}^\omega }
\newcommand{\card}             { {\tt card} }
\newcommand{\Rational}         { {\mathbb Q} }
\newcommand{\subsetsim}        { { {}^{\subset}_{\sim } } }
\newcommand{\codom}            { {\tt codom} }
\newcommand{\zeroaxiom}        { { 0\mbox{-axiom} } }

% instead of amsmath

\def\text#1{{\rm #1}}
\def\Candidates{\text{Candidates}}

% Presburger

\newcommand{\code}[1]{{{\ensuremath{\tt #1}}}}
\def\croot{\code{root}}
\def\ctrue{\code{true}}
\def\csortll{\code{sortll}}
\def\Sortll{\code{sortll}}
\def\Avl{\code{avl}}
\def\Sortavl{\code{sortavl}}

\def\NC{{\hbox{NC}}}
\def\PI{\hbox{PI}}
\def\DPI{\hbox{DPI}}
\def\SL{\hbox{SL}}
\def\SLA#1{\hbox{SLA{#1}}}
\def\Null{{\hbox{null}}}
\def\Pred{{\hbox{\tt pred\ }}}
\def\Z{{\hbox{\bf Z}}}
\def\ML{\code{ML}}
\def\NML{\code{NML}}
\def\True{\code{true}}
\def\False{\code{false}}
\def\Mod{\hbox{mod\ }}
\def\ProjS{{\cal P}_S}
\def\ProjN{{\cal P}_N}

% revise

\def\Nll{\hbox{nll}}

% btw

\def\SLRDbtw{\hbox{SLRD}_{btw}}
\def\BTW{{\hbox{BTW}}}
\def\Loc{{\hbox{Loc}}}
\def\Ne{{\ne}}

% weak establish

\def\Connected{\hbox{Connected}}
\def\Eststablished{\hbox{Eststablished}}
\def\Valued{\hbox{Valued}}

% monadic

\def\Nil{{\hbox{nil}}}
\def\eqDef{=_{\hbox{def}}}
\def\Inf#1{{\infty_{#1}}}
\def\Stores{\hbox{Stores}}
\def\SVars{{\hbox{SVars}}}
\def\Val{{\hbox{Val}}}
\def\MSO{{\hbox{MSO}}}
\def\Sep{{\hbox{Sep}}}
\def\THeaps{\hbox{THeaps}}
\def\Root{\hbox{Root}}
\def\TG{\hbox{TGraph}}

\def\Tilde{\widetilde}
\def\Bar{\overline}
\def\Lequiv{\Longleftrightarrow}
\def\Lto{\Longrightarrow}
\def\Lfrom{\Longleftarrow}
\def\Norm{\hbox{Norm}}
\def\Noshare{\hbox{Noshare}}
\def\Roots{\hbox{Roots}}
\def\Forest{\hbox{Forest}}
\def\Var{\hbox{Var}}
\def\Range{\hbox{Range}}
\def\Cell{\hbox{Cell}}
\def\Tree{\hbox{Tree}}
\def\Switch{\hbox{Switch}}
\def\All{\hbox{All}}
\def\To{\leadsto}
\def\tree{\hbox{tree}}

% previsou

\long\def\J#1{} % skip
\def\T#1{\hbox{\color{green}{$\clubsuit #1$}}}
\def\W#1{\hbox{\color{Orange}{$\spadesuit #1$}}}

\def\Node{{\hbox{Node}}}
\def\LL{{\hbox{LL}}}
\def\DSN{{\hbox{DSN}}}
\def\DCL{{\hbox{DCL}}}
\def\LS{{\hbox{LS}}}
\def\Ls{{\hbox{ls}}}

\def\FPV{{\hbox{FPV}}}
\def\Lfp{{\hbox{lfp}}}
\def\IsHeap{{\hbox{IsHeap}}}

\def\Equiv{\quad \equiv\quad }
\def\Null{{\hbox{null}}}
\def\Emp{{\hbox{emp}}}
\def\If{{\hbox{if\ }}}
\def\Then{{\hbox{\ then\ }}}
\def\Else{{\hbox{\ else\ }}}
\def\While{{\hbox{while\ }}}
\def\Do{{\hbox{\ do\ }}}
\def\Cons{{\hbox{cons}}}
\def\Dispose{{\hbox{dispose}}}
\def\Vars{{\hbox{Vars}}}
\def\Locs{{\hbox{Locs}}}
\def\States{{\hbox{States}}}
\def\Heaps{{\hbox{Heaps}}}
\def\FV{{\hbox{FV}}}
\def\True{{\hbox{true}}}
\def\False{{\hbox{false}}}
\def\Dom{{\hbox{Dom}}}
\def\Abort{{\hbox{abort}}}
\def\New{{\hbox{New}}}
\def\W{{\hbox{W}}}
\def\Pair{{\hbox{Pair}}}
\def\Lh{{\hbox{Lh}}}
\def\lh{{\hbox{lh}}}
\def\Elem{{\hbox{Elem}}}
\def\EEval{{\hbox{EEval}}}
\def\PEval{{\hbox{PEval}}}
\def\HEval{{\hbox{HEval}}}
\def\EVal{{\hbox{Eval}}}
\def\Domain{{\hbox{Domain}}}
\def\Exec{{\hbox{Exec}}}
\def\Store{{\hbox{Store}}}
\def\Heap{{\hbox{Heap}}}
\def\Storecode{{\hbox{Storecode}}}
\def\Heapcode{{\hbox{Heapcode}}}
\def\Lesslh{{\hbox{Lesslh}}}
\def\Addseq{{\hbox{Addseq}}}
\def\Separate{{\hbox{Separate}}}
\def\Result{{\hbox{Result}}}
\def\Lookup{{\hbox{Lookup}}}
\def\ChangeStore{{\hbox{ChangeStore}}}
\def\ChangeHeap{{\hbox{ChangeHeap}}}
\def\Wand{\mathbin{\hbox{\hbox{---}$*$}}}
\def\Eval#1{[\kern -1.5pt[{#1}]\kern -1.5pt]}
\def\Vec#1{{\bf #1}}

\def\Tilde{\widetilde}
\def\Break{\hfil\break\hbox{}}

%%% Local Variables:
%%% mode: latex
%%% TeX-master: "main.tex"
%%% End:

\title[Explicit Induction Is Not Equivalent to Cyclic Proofs]{Explicit Induction is Not Equivalent to Cyclic Proofs for Classical Logic with Inductive Definitions}

\author[Stefano Berardi]{Stefano Berardi\rsuper{a}} 
\author[Makoto Tatsuta]{Makoto Tatsuta\rsuper{b}}

\address{\lsuper{a}Universit\`{a} di Torino}
\address{\lsuper{b}National Institute of Informatics / Sokendai, Tokyo }

\begin{abstract}

A cyclic proof system, called CLKID-omega, gives us another way of
representing inductive definitions and efficient proof search.  The
2005 paper by Brotherston showed that the provability of
CLKID-omega
includes the provability of LKID, first order classical logic with inductive definitions in Martin-L\"of's style,
and conjectured the
equivalence.  The equivalence has been left an open
question since 2011.
This paper shows that CLKID-omega and LKID are indeed
not equivalent.  This paper considers a statement called 2-Hydra in
these two systems with the first-order language formed by 0, the successor,
the natural number predicate, and a binary predicate symbol used to
express 2-Hydra.  This paper shows that the 2-Hydra statement is
provable in CLKID-omega, but the statement is not provable in
LKID, by constructing some Henkin model where the statement is
false.

\end{abstract}

\maketitle

\Erase{\keywords{
proof theory,
inductive definitions,
Brotherston-Simpson conjecture,
cyclic proof,
Martin-Lof's system of inductive definitions,
Henkin models
}}

\section{Introduction}
\label{section:introduction}

An inductive definition is a way to define a predicate by an expression which may contain the predicate itself.
The predicate is interpreted by the least fixed point of the defining equation on sets.
Inductive definitions are important in computer science, since they can define useful recursive data structures such as lists and trees.
Inductive definitions are important also in mathematical logic, since
they increase the proof theoretic strength.
Martin-L\"of's system of inductive definitions given in \cite{Martin-Lof-1971} is
one of the most popular system of inductive definitions.
This system has production rules for an inductive predicate,
and the production rules determine the introduction rule and the elimination rule for the predicate.

Brotherston \cite{Brotherston-phd} and Simpson
\cite{Brotherston11} proposed an alternative formalization
of inductive definitions, called a cyclic proof system.
A proof, called {\em a cyclic proof}, is defined by proof search,
going upwardly in a proof figure.
If we encounter the same sequent (called a bud)
as some sequent we already passed (called a companion), 
or we found anywhere else in the proof-tree,
we can stop.
The induction rule is replaced by a case rule, for this purpose.
The soundness is guaranteed by
some additional condition, called the {\em global trace condition},
which guarantees
that in any infinite path of the proof-tree there is some infinitely decreasing inductive definition.
In general, for proof search,
a cyclic proof system can find an induction formula in a more efficient way
than induction rules in Martin-L\"of's style,
since a cyclic proof system does not have to choose a fixed induction formula
in advance.
A cyclic proof system enables efficient implementation of
theorem provers with inductive definitions \cite{Brotherston05,Brotherston08}.
In particular, it works well for theorem provers of Separation Logic \cite{Brotherston11a,Brotherston12}.

Brotherston and Simpson \cite{Brotherston11}
investigated
the system $\LKID$ of inductive definitions in classical logic
for the first-order language,
and
the cyclic proof system $\CLKIDomega$ for the same language,
showed the provability of
$\CLKIDomega$ includes that of $\LKID$,
and conjectured the equivalence.
 Since then, the equivalence has been left an open question.
In $2017$, Simpson~\cite{Simpson-2017} proved a particular case
of the conjecture, for the theory of Peano Arithmetic.

 This paper (which is the journal version of \cite{Berardi-Makoto-FOSSACS2017}) shows $\CLKIDomega$ and $\LKID$ are indeed not equivalent.
To this aim, we will consider the first-order language $L$ (with equality) formed by $0$, the
successor $\s$, the natural number predicate $\Npredicate$, and a
binary predicate symbol~$p$.  We introduce a statement we call
2-Hydra, which is a miniature version of the Hydra problem considered
by Kirby and Paris \cite{Kirby-1982}: the proviso
``$2$'' means that we only have two ``heads''.
\Makoto{ $H$ as $\Delta,N(x),N(y) \prove p(x,y)$where $\Delta$ is the
set of assumptions for $H$, is not necessary here}
We show that the 2-Hydra
statement is provable in $\CLKIDomega$ with language $L$,
but the statement is not provable in $\LKID$
with language $L$. 2-Hydra is similar to the candidate for a counter-example proposed by Stratulat \cite{Stratulat-2016}. 

The unprovability is shown by constructing some Henkin
model $\setM$ of $\LKID$ where 2-Hydra is false. $2$-Hydra is true in all standard models of $\LKID$, but $\setM$ is a non-standard model, in which both the universe of $\setM$ and the interpretation of the predicate $\Npredicate$ are $\Nat + \Zeta$, where $\Nat$ is the set of natural numbers and $\Zeta$ is the set of integers. Predicates of $\setM$ are the equality relation and one ``partial bijection'', i.e., a one-to-one correspondence between subsets of the universe of $\setM$. The proof that $\setM$ is a Henkin model of $\LKID$ immediately follows from a quantifier elimination result, which holds for all sets of partial bijections which are closed under composition and inverse.

Our quantifier elimination result  is new, to our best knowledge, and it may be of some independent interest. However, our interest is not the quantifier elimination result \emph{per se}, but rather the identification of this result as a way of proving the unprovability of 2-Hydra in $\LKID$.

The model $\setM$ also shows a side result, that $\LKID$ is not conservative when we add
inductive predicates. Namely, it is not the case that for any language
$L$, the system of $\LKID$ with language $L$ and any additional
inductive predicate is conservative over the system of $\LKID$ with
$L$.

This is the plan of the paper. 
Section  \ref{section:induction} describes inductive definitions, standard and Henkin models.
Section  \ref{section:LKID} defines the first order system $\LKID$ for inductive definitions,
the 2-Hydra statement, and proves 2-Hydra under two additional assumptions: the $0$-axiom and the existence of an ordering $\le$.
Section  \ref{section:CLKID} defines the system $\CLKID$ for cyclic proofs, gives a cyclic proof for the 2-Hydra statement and describes the Brotherston-Simpson conjecture. 
Section  \ref{section:counter-model} defines the structure $\setM$
and the proof outline that $\setM$ is a counter model.
Section  \ref{section:partial-bijections} introduces a set of partial bijections in $\setM$.
Section  \ref{section:quantifier-elimination} proves a quantifier elimination theorem for any set of partial bijections closed under composition and inverse.
Section  \ref{section:main} disproves the Brotherston-Simpson conjecture, by proving that the 2-Hydra statement is not provable in $\LKID$. As a corollary, we have non-conservativity of $\LKID$ with additional inductive predicates.
We conclude in Section  \ref{section:conclusion}.

\section{Inductive Definitions, Standard Models and Henkin Models}
\label{section:induction}
In this section quickly recall the notion of first order inductive definition, standard model and Henkin model, taken from  \cite{Brotherston11}.  This introduction is only a sketch and we refer to \cite{Brotherston11} for motivations and examples. We fix a first order language $\Sigma$ with equality that includes inductive predicate symbols $P_1, \ldots,P_n$ of arities $k_1, \ldots, k_n$.

\begin{defi}[Productions of $\Sigma$]
\label{definition-inductive-predicate-symbols}
An inductive definition set $\Phi$ for $\Sigma$ is a finite set of productions. A production is a rule
$$
\infer
{P_i (\vec{t})}
{Q_1(\vec{u_1})  \ \ldots    \ Q_h( \vec{u_h} )   \ P_{j_1}  ( \vec{t_1} )   \  \ldots  \   P_{ j_m} (\vec{ t_m} )}
$$
whose premises are a finite sequence of atomic formulas, where 
$Q_1, \ldots ,Q_h$ are ordinary predicate symbols, $ j_1 , \ldots , j_m ,i \in \{1, \ldots ,n\}$, $P_1, \ldots, P_n$ are inductive predicate symbols,
and all vector of terms have the appropriate length to match the arities of the predicate symbols.
\end{defi}

An example. Let $\Sigma = \{0, \s\}$ be the language with $0$ and the successor. Then a set of productions $\Phi_{\Npredicate}$ describing the inductive predicate $\Npredicate$ for ``being a natural number'' is:  
\[
\begin{array}{cc}
\infer
{\Npredicate(0)}
{} 
\ \ \ 
& 
\ \ \ 
\infer
{\Npredicate(\s x)}
{\Npredicate( x)} 
\end{array}
\]
We call the pair $(\Sigma, \Phi)$ an inductive definition system. The language for $(\Sigma, \Phi)$ is the first order language consisting of all constants, functions and predicates of $\Sigma$. The standard interpretation for $(\Sigma, \Phi)$  is obtained by considering the smallest
prefixed point of a monotone operator $\phi_{\Phi}$ defined below. From now on, we denote the powerset of a set $X$ by $\setP(X)$. In the next definition we suppose that $\rho$ is a valuation from finitely many variables to the universe, and that $\rho$ is applied componentwise on a vector of terms.

\begin{defi}[Monotone Operator $\phi_{\Phi}$]
\label{definition:monotone-operator}
Let $\setM$  with domain $\universe{\setM}$ be a first-order structure for $\Sigma$, and for each $ i \in\{1, \ldots ,n\}$, let $k_i$ be the arity of
the inductive predicate symbol  $P_i$. 
\begin{enumerate}
\item
$\Phi_i =\{\phi  \in \Phi \mid  
\hbox{the conclusion of $\phi$ is $P_i$}
 \}$.
\item
Assume $\Phi_i$ has the form $\{\Phi_{i,r} \mid  1 \le r \le |\Phi_i|\}$. For each rule $\Phi_{i,r}$ of the form shown in
\ref{definition-inductive-predicate-symbols}, we define 
$\Phi _{i,r} :  
\setP(\universe{\setM}^{k_{j_1}} ) \times  \ldots \times  \setP(\universe{\setM}^{k_{j_n}} ) 
\rightarrow
 \setP(\universe{\setM}^{k_i} )$ 
by:
$$
\Phi_{i,r} (X_1  \ldots ,X_n)
= 
\{\rho(\vec{t}) \mid \rho \mbox{ a valuation},  \ 
\rho( \vec{t_1} ) \in X_{j_1} , \ \ldots , \ \rho(\vec{ t_m} ) \in X_{j_m}, 
$$
$$
Q^{\setM}_1 (\rho(\vec{u_1})), \ \ldots , \ Q^{\setM}_h (\rho( \vec{u_h} )) \}.
$$
\item
We define $\Phi_i$ for each
$i \in\{1, \ldots ,n\}$ with the same domain and codomain, by:
$$
\Phi_i(X_1, \ldots ,X_n)
=
\bigcup_{1 \le r \le |\Phi_i|}
\Phi_{i,r} (X_1, \ldots ,X_n).
$$
\item
We define $\phi _\Phi$, with domain and codomain
$\setP(\universe{\setM}^{k_1} ) \times  \ldots \times  \setP(\universe{\setM}^{k_n} ) $, by:
$\phi _\Phi(X_1, \ldots ,X_n)=(\Phi_1(X_1, \ldots ,X_n), \ldots ,\Phi_n(X_1, \ldots ,X_n))$.
\end{enumerate}
\end{defi}

We extend union and subset inclusion to the corresponding pointwise operations on n-tuples of sets: in this way $\dom(\phi_\Phi)$ becomes a complete lattice. A prefixed point of $\phi_\Phi$ is $\vec{X} \in \dom(\phi_\Phi)$ such that $\vec{X} \subseteq \phi_\Phi(\vec{X})$. The map $\phi_\Phi$ is monotone on a complete lattice. Thus,  $\phi_\Phi$ has a unique smallest prefixed point by the Tarski Fixed Point Theorem. We define the standard model for $(\Sigma, \Phi)$ from such a prefixed point.

\begin{defi}[Standard model]
A first-order structure $\setM$ with universe $\universe{\setM}$  is said to be a standard model for $(\Sigma, \Phi)$ if the vector $(P^{\setM}_1, \ldots, P^{\setM}_n)$ of interpretations of $P_1, \ldots, P_n$ in $\setM$ is the smallest prefixed point of $\phi_{\Phi}$.
\end{defi}

The Henkin class for $\setM$ is a family of subsets $\setH_k \subseteq \universe{\setM}^k$, indexed on $k \in \Nat$, including all graphs of predicates of $\setM$ and closed w.r.t. all first order connectives, as we make precise below.

\begin{defi}[Henkin class for a first order structure $\setM$]
\label{definition-henkin-family}
Le $\setM$  with domain $\universe{\setM}$ be a structure for $\Sigma$. Assume $k,h \in \Nat$, $\vec{x} = x_1, \ldots ,x_k$ are variables, $t_1, \ldots, t_h$ are terms and $\vec{d} = d_1, \ldots ,d_k \in \universe{\setM}$. A Henkin class for $\setM$ is a family of sets 
$\setH = \{ \setH_k  \mid  k  \in \Nat\}$ 
such that, for each $k  \in \Nat$: $\setH_k \subseteq \setP(\universe{\setM}^k)$;
\begin{enumerate}[label=$(H_{\arabic*})$]
\item % [$(H_1)$] 
$\{(d,d)  \mid d  \in \universe{\setM}\} \in \setH_2$;
\item % [$(H_2)$]
  if $Q \in \Sigma$ is any predicate symbol of arity $k$ then $Q_\setM \in  \setH_k$ ;
\item % [$(H_3)$]
  if $R \in  \setH_{k +1}$ and $d  \in \universe{\setM}$ then $\{(\vec{d})  \mid  (\vec{d},d) \in R\} \in  \setH_k $;
\item % [$(H_4)$]
  if $R \in  \setH_h$  and  $t_1[\vec{x}], \ldots ,t_h[\vec{x}]$ are terms then
$\{ (\vec{d})  \mid  ( t^\setM_1 [ \vec{d}], \ldots , t^\setM_h [ \vec{d}]) \in R \} \in H_k$;
\item % [$(H_5)$]
  if $R \in  \setH_k$  then $(\universe{\setM}^k \setminus R) \in  \setH_k $;
\item % [$(H_6)$] 
if $R_1,R_2  \in  \setH_k$  then $R_1 \cap R_2  \in  \setH_k $;
\item % [$(H_7)$] 
if $R \in  \setH_{k +1}$ then $\{(\vec{d}) \mid \exists d\in\universe{\setM}. (\vec{d},d) \in R\} \in  \setH_k $.
\end{enumerate}
\end{defi}

The smallest Henkin family $\setH_\calM$ for a structure $\setM$, and the only Henkin family we will consider later, is the set of definable sets in $\setM$.

\begin{defi}[The Henkin family $\setH_\setM$]
 \label{definition-definable-predicates}
Assume $\setM$ is a structure of language $\Sigma$. Let $k \in \Nat$. We write $\vec{u}, \vec{v}$ for two vectors of elements in $\universe{\setM}$ of the same length as the vectors of variables $\vec{x}, \vec{y}$. Then the family of sets $\setH_\setM = \{\setH_k \mid k \in \Nat\}$ is defined by:
\[
\setH_k = \{\ \
 \{ 
 \vec{u} \in \universe{\setM}^k  \mid 
\setM \models F[\vec{u},\vec{v}/\vec{x},\vec{y}] 
\} 
\ \ | \ \ 
(F\in L(\Sigma)) \ \wedge \
(\FV(F) \subseteq\vec{x}, \vec{y}) \  \wedge \
(\vec{v} \in \universe{\setM}) \} 
\]
\end{defi}

In a Henkin Model for $\Sigma$, instead of requiring that $(P^{\setM}_1, \ldots, P^{\setM}_n)$ is the smallest prefixed points of $\phi_\Phi$ w.r.t. all subsets of $\universe{\setM}$, we require that it is the smallest prefixed points w.r.t. all sets in some Henkin class $\setH_k$ for $\setM$.

\begin{defi}[Henkin model]
Let $\setM$ be a first-order structure for $(\Sigma, \Phi)$ and $\setH$ be a Henkin class for $\setM$. Then $(\setM,\setH)$ is a
Henkin model for $(\Sigma, \Phi)$ if $(P^\setM_1, \ldots, P^\setM_n)$ is the least prefixed point of $\phi_\Phi \lceil \setH$, where $\phi_\Phi \lceil \setH$ is $\phi_\Phi$ with each argument in $\setP(\universe{\setM}^k)$ being restricted to $\setH_k$.
We say $\calM$ is a Henkin model when $(\calM, \calH_\calM)$ is a Henkin model.
\end{defi}

In a Henkin model the interpretation of $P_1, \ldots, P_n$ may be larger than the smallest prefixed points of $\phi_\Phi$. Some Henkin models are not standard models, and this paper will discuss a Henkin model which is not a standard model.

\section{The system $\LKID$ for inductive definitions and the $2$-Hydra statement}
\label{section:LKID}
In this section we quickly introduce some definitions and results of \cite{Brotherston11}, in order to make the paper self-contained. We describe $\LKID(\Sigma, \Phi)$, formalizing the notion of inductive proof for a first order language $\Sigma$ with equality, and for the the set of productions $\Phi$ (see section \ref{section:induction}). We state that $\LKID$ is sound and complete with respect to Henkin models (again, see \cite{Brotherston11}). Then we formalize the $2$-Hydra statement, which is our work. In later sections we will prove that $2$-Hydra is false in some Henkin model, and we use $2$-Hydra to distinguish between provability in $\LKID$ and cyclic proofs.
\\

We write sequents of the form $\Gamma \vdash \Delta$ where $\Gamma, \Delta$ are finite sets of formulas. We write $\Gamma[\theta ]$ for the application of substitution $\theta$ to all formulas in $\Gamma$.
For first-order logic with equality, we use the (standard) sequent calculus rules, with contraction implicitly given. $\LKID(\Sigma, \Phi)$ has a rule for 
substitution, and rules for equality. There are logical rules and rules for inductive predicates.

Structural and logical rules of $\LKID$ are the following.
\\
\\
{\bf Structural rules:}

\[
\begin{array}{cc}

\infer[\Gamma \cap   \Delta \not =\emptyset    (\mbox{Axiom})]
{\Gamma \vdash \Delta}
{} 

& 

\infer[\Gamma \subseteq \Gamma' \ \ \  \Delta \subseteq \Delta'  \ \ \ (\mbox{Wk})]
{\Gamma' \vdash \Delta'}
{\Gamma \vdash \Delta  } 

\\
\\

\infer[(\mbox{Cut})]
{\Gamma \vdash \Delta}
{\Gamma \vdash \Delta, F & F,\Gamma \vdash \Delta}

 &
 \infer[(\mbox{Subst})]
 {\Gamma[\theta] \vdash \Delta[\theta]}
 {\Gamma \vdash \Delta} 
 \\
\end{array}
\]
\\  %16:21 01/11/2018
\\
{\bf Logical rules:}

\[
\begin{array}{ccc}

\infer[(\neg L)]{\Gamma, \neg F \vdash \Delta}{\Gamma \vdash F, \Delta} 
&  
\infer[(\neg R)]{\Gamma \vdash \neg F, \Delta}{\Gamma,F \vdash \Delta} 

\\
\\

\infer[(\vee L)]{\Gamma, F \vee G \vdash \Delta}{\Gamma, F \vdash \Delta & \Gamma, G \vdash \Delta} 
&  
\infer[(\vee R)]{\Gamma \vdash F \vee G, \Delta}{\Gamma \vdash F, G, \Delta}

\\
\\

\infer[(\wedge L)]{\Gamma, F \wedge G \vdash \Delta}{\Gamma, F, G \vdash \Delta} 
&  
\infer[(\wedge R)]{\Gamma \vdash F \wedge G, \Delta}{\Gamma \vdash F, \Delta & \Gamma \vdash G, \Delta}  

\\
\\

\infer[(\rightarrow L)]{\Gamma, F \rightarrow G \vdash \Delta}{\Gamma \vdash F, \Delta & \Gamma, G \vdash \Delta} 
&  
\infer[(\rightarrow R)]{\Gamma \vdash F \rightarrow G, \Delta}{\Gamma, F \vdash G, \Delta}

\\
\\

\infer[(x \not \in \FV(\Gamma,\Delta) \ \ \  (\exists L)]
{\Gamma, \exists x. F \vdash \Delta}
{\Gamma, F \vdash \Delta } 

&  
\infer[(\exists R)]
{\Gamma \vdash \exists x.F, \Delta}
{\Gamma \vdash F[t/x], \Delta}

\\
\\

\infer[(\forall L)]
{\Gamma, \forall x. F \vdash \Delta}
{\Gamma, F[t/x] \vdash \Delta } 

&  
\infer[(x \not \in \FV(\Gamma,\Delta)  \ \ \  (\forall R)]
{\Gamma \vdash \forall x.F, \Delta}
{\Gamma \vdash F, \Delta}

\\
\\

\infer[(=\!\!\!L)]{\Gamma[t/x,u/y], t=u\vdash \Delta[t/x,u/y]}{\Gamma[u/x,t/y]\vdash \Delta[u/x,t/y]} 
&  
\infer[(=\!\!\!R)]{\Gamma \vdash t=t, \Delta}{}  

\end{array}
\]
\\
\\

We define left- and right-introduction rules for induction. For each production in $\Phi$ of the form:
$$
\infer
{P_i (\vec{t}[\vec{x}])}
{Q_1(\vec{u_1}[\vec{x}])  \ldots  Q_h( \vec{u_h}[\vec{x}] ) P_{j_1}  ( \vec{t_1}[\vec{x}] )  \ldots  P_{j_m} (\vec{ t_m} [\vec{x}])}
$$
we include the following right-introduction rule for  $P_i$ in $\LKID(\Sigma, \Phi)$:
$$
\infer
{\Gamma \vdash \Delta, P_i (\vec{t}[\vec{u}])}
{\Gamma \vdash \Delta, Q_1(\vec{u_1}[\vec{u}])  \ldots  \Gamma \vdash \Delta,Q_h( \vec{u_h}[\vec{u}] )  &  \Gamma \vdash \Delta, P_{j_1}  ( \vec{t_1}[\vec{u}] )  \ldots  \Gamma \vdash \Delta, P_{j_m} (\vec{ t_m}[\vec{u}] )}
$$
We assume that $\vec{u}$ is a vector of terms of the same length as $\vec{x}$.

We express left-introduction rules for inductive predicates in the form of induction rules for mutually depending predicates. We define mutual dependency first.

\begin{defi}[Mutual dependency \cite{Brotherston11}]
Let $P_i, P_j$ be inductive predicate symbols of $\Sigma$. 
\begin{enumerate}
\item
$P_j$ is a premise of $P_i$ if $P_i$  occurs in the conclusion of some production in $\Phi$, and $P_j$ occurs among the premises
of that production. 
\item
$P_i$  and $P_j$ are mutually dependent if there is a chain for the \emph{``premise relation''} from $P_i$ to $P_j$, and conversely.
\end{enumerate}
\end{defi}

In order to define the left-introduction rule for any inductive predicate $P_j$, we first
associate with every inductive predicate  $P_i$  a tuple $\vec{z}_i$ of $k_i$ distinct variables (called induction variables), where $k_i$ is the arity of  $P_i$, and a formula (called an induction hypothesis) $F_i$, possibly containing (some of) the induction variables $\vec{z}_i$. We define a formula $G_i$ for each $i \in \{1, \ldots ,n\}$ by:
$G_i= F_i$ if  $P_i$  and $P_j$ are mutually dependent
and  $G_i= P_i (\vec{z}_i)$ otherwise.
We write $G_i\vec{t}$ for $G_i[\vec{t}/\vec{z}_i]$, and the same for $F_i$. Then the induction rule for $P_j$ has the following form:
$$
\infer
{\Gamma,P_j\vec{u} \vdash \Delta}
{\mbox{minor premises} & \Gamma,F_j\vec{u} \vdash \Delta}
$$
The premise $\Gamma,F_j\vec{u} \vdash \Delta$ is called the major premise of the rule, and for each production
of $\Phi$ having in its conclusion a predicate $P_i$  that is mutually dependent with $P_j$, say:
$$
\infer
{P_i (\vec{t}[\vec{x}])}
{Q_1(\vec{u_1}[\vec{x}])  \  \ldots  Q_h( \vec{u_h}[\vec{x}] )  \  P_{j_1}  ( \vec{t_1}[\vec{x}] )  \   \ldots    \  P_{j_m} (\vec{ t_m} [\vec{x}])}
$$
there is a corresponding minor premise:
$$
\Gamma,Q_1\vec{u_1}[\vec{y}],  \ldots  ,Q_h \vec{u_h} [\vec{y}],G_{j_1}   t_1 [\vec{y}],  \ldots  ,G_{ j_m}\vec{t_m} [\vec{y}] \vdash F_i\vec{t}[\vec{y}],\Delta
$$
where $\vec{y}$ is a vector of the same length as $\vec x$
for fresh variables.

An alternative formalization of induction is the \emph{induction schema}. 

\begin{defi}[Induction schema]
\label{definition-induction-schema}
The induction schema  is the following set of axioms: 
$$
\mbox{(universal closures of minor premises) } 
\rightarrow \forall \vec{y}.(P_i\vec{t}(\vec{y}) \rightarrow F_i\vec{t}(\vec{y})), \ \  \ 
\mbox{ for } i\in\{1, \ldots, n\}
$$ 
$\setM$ is defined to \emph{satisfy the induction schema} if and only if all formulas of the induction schema are true in $\setM$.
\end{defi}

The axioms in the induction schema derive all instances of the induction rule and conversely. 
By definition unfolding we have the following. If a structure $\calM$
has $0,s,N$ and the inductive predicate symbol is only $N$,
then $(\setM,\setH_\setM)$ is
a Henkin model if and only if: if $F[0/x]$ is true, and for all closed
terms $t$ if $F[t/x]$ is true then $F[\s t/x]$ is true, then we have
$F[u/x]$ true, for all closed terms $u$. Thus, such a structure $\setM$ is a Henkin
model if and only if $\setM$ satisfies the induction
schema. The same remark applies to all inductive predicates.

We write $A_1, \ldots, A_n
\implies B$ for $A_1 \wedge \ldots \wedge A_n \implies C$ and $\forall
x_1, \ldots, x_n \in \Npredicate. \ A$ for $\forall x_1. \ \ldots. \ \forall
x_n. (\Npredicate(x_1) \wedge \ldots \wedge \Npredicate(x_n) \implies
A)$. 
We abbreviate $1$ and $2$ for ${\s} 0$ and $\s \s 0$ respectively.

{\bf The case of the predicate $\Npredicate$}. The induction rule for the `natural number' predicate $\Npredicate$ 
(section  \ref{section:induction}) is:
$$
\infer[(\mbox{Ind } N)]{\Gamma,Nt\vdash \Delta}{\Gamma \vdash F0,\Delta &  \Gamma,Fx\vdash Fsx,\Delta &  \Gamma,Ft\vdash\Delta}
$$

where $x$ is fresh and $F$ is the induction formula associated with the predicate $\Npredicate$. The induction schema for $\Npredicate$ is the set of axioms 
$$
F0, (\forall x.Fx\rightarrow Fsx) \ \rightarrow \forall x.(Nx \rightarrow Fx)
$$ 
for any formula $F$. $\setM$ \emph{satisfies the induction schema} for $\Npredicate$ if and only if $\setM \models F0, (\forall x.Fx\rightarrow Fsx) \ \rightarrow \forall x.(Nx \rightarrow Fx)$.

\begin{defi}[Validity and Henkin Validity]
A sequent $\Gamma \vdash \Delta$ is said to be valid if it is true in all standard models. Let $(\setM,\setH)$ be a Henkin model for $\LKID(\Sigma,\Phi)$. A sequent $\Gamma \vdash \Delta$ is said to be true in $(\setM,\setH)$ if, for all valuations $\rho$ for $\setM$, whenever $\setM \models_\rho J$ for all $J  \in \Gamma$  then $\setM \models_\rho K$ for some $K  \in \Delta$. A sequent is said to be Henkin valid if it is true in all Henkin models. 
\end{defi}

A derivation tree is a tree of sequents in which each sequent is obtained as the
conclusion of an inference rule with its children as premises. A proof in $\LKID$ is a finite
derivation tree all of whose branches end in an axiom. The basic result about provability is:

\begin{thm} [Henkin soundness and completeness of $\LKID$ \cite{Brotherston11}]
$\Gamma \vdash \Delta$ is a sequent provable in $\LKID$ if and only if $\Gamma \vdash \Delta$ is Henkin valid.
\end{thm}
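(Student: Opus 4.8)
The plan is to prove the two implications separately: \emph{soundness} (provable implies Henkin valid) by induction on derivations, and \emph{completeness} (Henkin valid implies provable) by contraposition through a Henkin-style model construction. Since the structural, logical, and equality rules of $\LKID$ are exactly those of ordinary classical first-order sequent calculus, their treatment is entirely routine on both sides; all the genuine content is concentrated in the right-introduction rules for the inductive predicates and in the induction (left-introduction) rules.

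For soundness I would fix a Henkin model $(\setM,\setH)$ and a valuation, and argue by induction on the proof that if all premises of a rule are true then so is the conclusion. The only cases requiring the Henkin structure are the inductive ones. For a right-introduction rule for $P_i$, I use that $(P_1^\setM,\ldots,P_n^\setM)$ is a prefixed point of $\phi_\Phi\lceil\setH$: a production is precisely a closure condition, so if the bodies $Q_j(\vec{u_j})$ and $P_{j_\ell}(\vec{t_\ell})$ hold then $P_i(\vec t)$ holds. For an induction rule for $P_j$, I use \emph{minimality}. The induction hypotheses $F_i$, relativized and with their parameters, each define a set of the Henkin class, because the closure conditions $(H_1)$--$(H_7)$ guarantee that the interpretation of any formula lies in $\setH$; the validity of the minor premises says exactly that the tuple of these sets is a prefixed point of $\phi_\Phi\lceil\setH$, so by leastness $P_i^\setM$ is contained in the set defined by $F_i$, and the major premise then yields the conclusion. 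Mutually dependent predicates are handled by treating their whole block simultaneously, taking the product of the corresponding defining sets.

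For completeness I would assume $\Gamma\vdash\Delta$ is unprovable and build a countermodel. Starting from $\Gamma\vdash\Delta$, I construct a fair, exhaustive reduction tree by applying all rules backwards, adding fresh witness constants for the quantifier rules; since the endsequent is unprovable, the completed tree has an open branch $\mathcal{B}$, one never closed by an axiom. From $\mathcal{B}$ I read off a term model $\setM$: a left atom is declared true and a right atom false, each inductive predicate $P_i^\setM$ being the set of tuples whose atoms appear on the left of $\mathcal{B}$. I then take $\setH=\setH_\setM$, the family of definable sets, which is automatically a Henkin class since definable sets visibly satisfy $(H_1)$--$(H_7)$. Fairness gives the usual truth lemma, so $\setM$ makes every formula of $\Gamma$ true and every formula of $\Delta$ false. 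It remains to verify that $(\setM,\setH_\setM)$ is genuinely a Henkin model, i.e.\ that $(P_1^\setM,\ldots,P_n^\setM)$ is the \emph{least} prefixed point of $\phi_\Phi\lceil\setH_\setM$: closure follows from the saturated right-introduction steps, and minimality w.r.t.\ definable sets is exactly the assertion that every instance of the induction schema holds in $\setM$---which it does, since each instance is provable in $\LKID$ and hence present on the open branch.

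The hardest part is this last verification in the completeness direction: matching ``least prefixed point among the definable sets $\setH_\setM$'' with ``the induction schema holds for all formulas.'' The care needed lies in the bookkeeping---ensuring the reduction is fair enough that every formula, every relevant instance of the induction rule, and every required Henkin witness is eventually processed---and in handling mutual dependency, where one must use the joint induction schema for a whole dependency block and track the free parameters of the induction formulas through the Henkin-class closure conditions $(H_3)$ and $(H_4)$. Once fairness and the parameter handling are set up correctly, the equivalence ``least prefixed point over $\setH_\setM$'' $\Leftrightarrow$ ``full induction schema'' falls out as in the remark already recorded for the predicate $\Npredicate$.
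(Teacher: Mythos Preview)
The paper does not give its own proof of this theorem; immediately after stating it the text says ``We refer to \cite{Brotherston11} for a proof.'' So there is no argument in the present paper to compare your attempt against. Your outline is the standard Brotherston--Simpson strategy (soundness by induction on derivations, completeness via a Sch\"utte-style search tree and a term model read off an open branch, with $\setH=\setH_\setM$), and it is essentially correct.

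One point to tighten: in the completeness direction you justify minimality of $(P_1^\setM,\ldots,P_n^\setM)$ among definable prefixed points by saying that each instance of the induction schema ``is provable in $\LKID$ and hence present on the open branch.'' Provability of a sequent does not by itself place any formula on an open branch of the search tree for a \emph{different}, unprovable sequent. What actually forces the induction schema to hold in the term model is that your fair reduction applies the induction rule backwards at every left occurrence of an inductive atom $P_j(\vec u)$ with \emph{every} formula (over the Henkin-extended language) as induction hypothesis; the failure of the branch to close under any such application is precisely what makes the corresponding schema instance true in $\setM$. This is exactly the ``fairness'' you allude to earlier---just make that the stated reason rather than provability. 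With that adjustment, and with the bookkeeping for mutual dependency and parameters that you already flag, your sketch matches the argument of \cite{Brotherston11}.
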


We refer to \cite{Brotherston11} for a proof. Completeness does not hold for (standard) validity: there are valid sequents with no proof in $\LKID$. One example is the sequent $\vdash H$, where $H$ is the $2$-Hydra statement, defined below.

\subsection{The Hydra Problem}
The Hydra of Lerna was a mythological monster, popping two smaller heads whenever you cut one. It was a swamp creature (its name means ``water'') and possibly was the swamp itself, whose heads are the swamp plants, with two smaller plants growing whenever you cut one. The original Hydra was defeated by fire, preventing heads from growing again. In the mathematical problem of Hydra, we ask whether it is possible to destroy an Hydra just by cutting heads.

Kirby and Paris \cite{Kirby-1982} formulated the Hydra problem as a statement for mathematical trees. We are interested about making Hydra a problem for natural numbers, representing the length of a head, and restricting to the case when the number of heads is always $2$. We call our statement $2$-Hydra. It is a miniature version of the Kirby-Paris statement. $2$-Hydra will give a counterexample to the Brotherston-Simpson conjecture.

\subsection{The $2$-Hydra Statement}

In this subsection we give the 2-Hydra statement,
which is a formula saying that any 2-Hydra eventually loses its two heads.

Let $\Sigma_\Npredicate$ be the signature $\{0, \s, p, N \}$,
which are zero, the successor, an ordinary binary predicate symbol
$p$, and an inductive predicate $\Npredicate$ for natural numbers.
The logical system $\LKID(\Sigma_\Npredicate,\Phi_\Npredicate)$ is
defined as the system $\LKID$ with the signature $\Sigma_N$ and the
production rules $\Phi_N$ for $\Npredicate$ (see section
\ref{section:induction}). We define the $(0,\s )$-axioms in this
language as the axioms ``$0$ is not successor'' or $\forall x \in
\Npredicate.  \ \s x \not = 0$, and ``successor is injective'', or
$\forall x, y \in \Npredicate. \ \s x= \s y \implies x = y$.

We consider a formal statement $H$ for $2$-Hydra. $H$ says that the number of heads
is always $2$, and we can win a game having the following rules:
\begin{enumerate}
\item When both heads have positive length, we cut them off completely.
Then the first head shrinks to become $1$ unit shorter than the previous head and the second head shrinks to become $2$ units shorter than the previous head, if these shorter lengths exist.
Otherwise we win.
\item
When there is a unique head of positive length, we cut it off completely.
Then the first head shrinks to become $1$ unit shorter than the original head of
positive length and the second head shrinks to become $2$ units shorter than the original head of positive length, if these shorter lengths exist.
Otherwise we win.
\end{enumerate}
We express $H$ by saying that some set of transformations eventually reaches a winning condition. The winning condition is the union of the winning conditions for the points $1$ and $2$ above. Let $n, m \in \Nat$. The the winning conditions and the transformations  are:  
\begin{enumerate}
\item
we win if we reach the cases: $(0,0)$, $(1,0)$ and $(x,1)$ for any $x \in \Nat$. 
\item
if $n \ge 1$ and $m \ge 2$ then
$(n,m) \mapsto (n-1,m-2)$; 
\item
if $m \ge 2$ then $(0,m) \mapsto (m-1,m-2)$;
\item
if $n \ge 2$ then $(n,0) \mapsto (n-1,n-2)$; 
\end{enumerate}
The four cases listed above are pairwise disjoint and cover all $(n, m) \in \Nat^2$. For instance, case $4$ is disjoint from cases $1, 2, 3$. When we win, no transformation applies. Indeed, no transformation applies from $(0,0)$ and $(1,0)$, because if $m=0$ we require $n \ge 2$. No transformation applies when $m=1$, because transformations $1$ and $2$ require $m \ge 2$, and transformation $3$ requires $m=0$. We define $H$ by a formula in  the language $\Sigma_\Npredicate$. 

\begin{defi}[2-Hydra Statement $H$]
\label{definition:H}
We define
$H = (H_a,H_b,H_c,H_d \implies \forall x,y \in \Npredicate. \ p(x,y))$, where $H_a,H_b,H_c,H_d$ are:
\begin{enumerate}[label=$(H_{\alph*})$]
\item % [$(H_a)$]
$\forall x \in \Npredicate. \  p(0,0) \wedge p(1,0) \wedge p(x,1)$,
\item % [$(H_b)$]
$\forall x,y \in \Npredicate. \  p(x,y) \implies p(\s x, \s \s y)$,
\item % [$(H_c)$]
$\forall y \in \Npredicate. \  p(\s y,y) \implies p(0,\s \s y)$,
\item % [$(H_d)$]
$\forall x \in \Npredicate. \  p(\s x,x) \implies p(\s \s x,0)$.
\end{enumerate}
\end{defi}

For a closed term of $\{0,\s\}$,
its length is defined as the number of symbols $\s$ in it. Assume $n, m$ are closed terms of $\{0,\s\}$. Then $p(n,m)$ means that we win 
for the $2$-Hydra game beginning with the first head being of length $n$
and the second head being of length $m$.
For all closed terms $n,m$ of $\{0,\s\}$, there is a formula among $H_a,H_b,H_c,H_d$ having some instance inferring $p(n,m)$. The formula is unique if we assume the standard $(0,\s)$-axioms. 
$H_a$ says that $p(0,0)$, $p(1,0)$ and $p(n,1)$ for any closed term $n$ are true, and expresses the winning condition of the game. Each instance of $H_b,H_c,H_d$ is some implication 
$p(n',m') \rightarrow p(n,m)$ such that the maximum length of $n',m'$ is smaller than 
the maximum length of $n,m$. Thus, for all closed terms $n,m$ of $\{0,\s\}$, 
$p(n,m)$ is true in all standard models of $\LKID(\Sigma_N,\Phi_N)$: it is shown by induction on the maximum length of $n,m$. An example: we derive $p(1,4) $ by $H_b$ and $p(0,2)$, the latter by $H_c$ and $p(1,0)$, the latter by $H_a$. In a standard model, the interpretation of $\Npredicate$ is the set of interpretations of closed terms of $\{0,\s\}$: as a consequence, $2$-Hydra is is true in all standard models of $(\Sigma_\Npredicate, \Phi_\Npredicate)$.

However, we will prove that $\LKID(\Sigma_\Npredicate, \Phi_\Npredicate) + (0,\s)$-axioms does not prove $2$-Hydra. Remark that the $(0,\s)$-axioms define a proper extension of $\LKID(\Sigma_\Npredicate, \Phi_\Npredicate)$. These axioms cannot be proved in $\LKID(\Sigma_\Npredicate,
\Phi_\Npredicate)$, because each of them fails in the following models.
\begin{enumerate}
\item 
the model with domain
$\universe{\setM} = \Npredicate_\setM = \{0\}$, $\s 0=0$; 
\item
the model with domain $\universe{\setM} = \Npredicate_\setM = \{0, \s 0\}$, $0 \not = \s
0$ and $\s \s 0 = \s 0$. 
\end{enumerate}
Compared with Peano Arithmetic $\PA$, in
$\LKID(\Sigma_\Npredicate, \Phi_\Npredicate) + (0,\s)$-axioms we do
not have a sum or a product on $\Npredicate$,
and we do not have inductive predicate symbols for addition, multiplication, 
or order.

\subsection{$2$-Hydra is provable under additional assumptions}
As an example of a formal proof in $\LKID$, we prove $2$-Hydra under two additional assumptions: the inductive predicate~$\le$ and the $0$-axiom, which will be defined.

The inductive predicate $\le$ is defined from the following production rules:
\[
\infer{x \le x}{} \qquad \infer{x \le \s y}{x \le y}
\]
We call the set of these production rules $\Phi_{\le}$.
The 0-axiom is: $\forall x \in \Npredicate. \  \s x \not = 0$.
In $\LKID(\Sigma_N+\{\le\},\Phi_N+\Phi_{\le})$, we can show
any number $\le 0$ is only 0.

\begin{lem}
\label{lemma:less-than-zero}
$\zeroaxiom, Nx,Ny, x \le y \prove y=0 \implies x=0$
\end{lem}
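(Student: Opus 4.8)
The plan is to eliminate the hypothesis $x \le y$ by the left-introduction (induction) rule for the inductive predicate $\le$, choosing a suitable induction formula. The naive choice $F(z_1,z_2) \equiv (z_2 = 0 \implies z_1 = 0)$ turns out to fail, and finding the right strengthening is the heart of the argument. Concretely, I would apply the induction rule for $\le$ with induction variables $z_1, z_2$ and induction formula
$$ F(z_1, z_2) \ \equiv\ Nz_1 \implies \bigl(Nz_2 \wedge (z_2 = 0 \implies z_1 = 0)\bigr). $$
This yields one major premise and one minor premise for each of the two productions of $\Phi_{\le}$.

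The major premise is $\zeroaxiom, Nx, Ny, F[x,y/z_1,z_2] \vdash (y = 0 \implies x = 0)$; since $Nx$ is in the antecedent, $F[x,y]$ delivers $y = 0 \implies x = 0$ directly, so this premise is immediate. The minor premise for the base production $x \le x$ asks to prove $F[a,a]$ for a fresh variable $a$, i.e.\ $Na \implies (Na \wedge (a = 0 \implies a = 0))$, which is a propositional triviality. The minor premise for the step production $x \le \s y$ (with premise $x \le y$) asks, for fresh $a,b$, to derive $F[a,\s b]$ from $F[a,b]$: assuming $Na$ together with the invariant for $(a,b)$, we must prove $N\s b \wedge (\s b = 0 \implies a = 0)$. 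From $F[a,b]$ and $Na$ we obtain $Nb$; then $N\s b$ follows by the successor production rule for $N$, and the $\zeroaxiom$ instantiated at $b$ gives $\s b \neq 0$, so the implication $\s b = 0 \implies a = 0$ holds vacuously. Assembling the three premises closes the induction.

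The main obstacle is precisely this step case. With the naive invariant $(z_2 = 0 \implies z_1 = 0)$ one would have to rule out $\s b = 0$, but the $\zeroaxiom$ only forbids successors of \emph{natural numbers}, while the predecessor $b$ of the second argument is a priori not known to be a natural number: the induction rule replaces it by a fresh variable, forgetting that along the actual derivation $b$ equals $\s^{j} x$ and is therefore natural. The fix, and the crux of the proof, is to carry $Nz_2$ inside the induction invariant (guarded by $Nz_1$), so that $Nb$ is available exactly where the $\zeroaxiom$ must be invoked. I note in passing that, with this invariant, the hypothesis $Ny$ is actually redundant, since naturalness of the second argument is re-derived along the induction; an essentially equivalent route is to substitute $0$ for $y$ first (via the $y=0$ assumption and the equality rule) and then run the same induction on $x \le 0$.
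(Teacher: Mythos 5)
Your proof is correct and follows essentially the same route as the paper's: induction on the derivation of $x \le y$, with the base case trivial and the step case discharged vacuously via the $0$-axiom. The only difference is that you strengthen the induction formula so as to carry $N z_2$ along the induction (so that the $0$-axiom, which is restricted to natural numbers, can be applied to the fresh predecessor variable in the step case); the paper's one-line sketch silently assumes this is available, so your version is simply a more careful rendering of the same argument.
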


\begin{proof}[Proof]

The proof is by induction on the definition of $x \le y$. If $y$ is
$x$ then $x=0 \implies x=0$, if $y$ is $\s(z)$ and the property holds
for $x, z$ then we trivially have $\s(z)=0 \implies x=0$ by
$\zeroaxiom$.
\end{proof}

The next theorem shows 2-Hydra is provable in $\LKID$ with $\le$.
\begin{thm}\label{theorem:hydra-order}
$\zeroaxiom \prove H$ is provable in $\LKID(\Sigma_N+\{\le\},\Phi_N+\Phi_{\le})$.
\end{thm}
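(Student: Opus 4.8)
The plan is to internalize the informal well-founded recursion behind 2-Hydra --- ``cut the two heads and recurse on strictly shorter heads'' --- as an explicit induction in $\LKID$, using the newly available inductive predicate $\le$ both to measure the heads and to drive the induction, and using Lemma~\ref{lemma:less-than-zero} together with $\zeroaxiom$ to close the base configurations. The four cases of the game correspond exactly to the axioms $H_a$ (winning positions) and $H_b,H_c,H_d$ (the three reductions), so the proof skeleton is a case analysis on the shapes of the two heads that either stops (via $H_a$) or reduces (via $H_b,H_c,H_d$) to a pair of strictly smaller heads.

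First I would record the basic order facts, each proved by induction on the definition of $\le$ (that is, by the elimination rule for $\le$, exactly as in Lemma~\ref{lemma:less-than-zero}): for all $x,y\in\Npredicate$, (a) $0\le x$; (b) reflexivity (an axiom) and transitivity of $\le$; (c) monotonicity $x\le y \implies \s x\le \s y$; and (d) the forward inversion $x\le y \implies (x=y \vee \s x\le y)$, provable by induction on $\le$ with induction formula $(u = v \vee \s u\le v)$, the successor step using only the successor rule of $\le$ and reflexivity. Fact (d) is the crucial one, and I stress that it does \emph{not} require injectivity of $\s$ --- important, since only $\zeroaxiom$ is available. From (a) and (d) one obtains comparability $\forall x,y\in\Npredicate.\,(x\le y \vee y\le x)$ by induction on $x$; this will be used at the very end, together with (b), to produce a common bound $w$ with $x\le w$ and $y\le w$ for two arbitrary heads.

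The core is then an induction on a bound $w\in\Npredicate$ for the longer head, establishing $p(x,y)$ for all pairs whose heads are bounded by $w$. The base case forces $x=y=0$ by Lemma~\ref{lemma:less-than-zero} and $\zeroaxiom$, and is discharged by $p(0,0)$ from $H_a$. In the inductive step one performs case analysis on the shapes of $x$ and $y$ via the inversion rule for $\Npredicate$ (each is $0$, or $\s 0$, or $\s\s(\cdot)$): the winning configurations $(0,0)$, $(\s 0,0)$ and $(x,\s 0)$ are discharged directly by $H_a$, while the three recursive configurations are discharged by $H_b$, $H_c$, $H_d$, each reducing the goal to $p$ of a pair with a strictly shorter head to which the induction hypothesis is applied. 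Once $\forall w.\,p$ holds for all pairs bounded by $w$, comparability supplies the common bound and yields $\forall x,y\in\Npredicate.\,p(x,y)$, hence $\zeroaxiom\vdash H$.

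I expect the inductive step to be the main obstacle, for a reason specific to the weakness of the available axioms. Since we have only $\zeroaxiom$ and not injectivity of $\s$, the predicate $\le$ need not be antisymmetric and admits no predecessor law in non-standard Henkin models (one can have $\s a\le \s b$ without $a\le b$), so one cannot naively ``subtract one'' from the bound when passing from a pair to its reduct. The case analysis must therefore be organized so that every reduct is bounded using only the monotone and forward facts (b)--(d), with the reductions performed by instantiating $H_b,H_c,H_d$ at explicitly named predecessor terms rather than by inverting $\le$, and so that any configuration whose reduct would escape the bound is instead a winning configuration handled by $H_a$. Getting this bookkeeping right --- equivalently, choosing the induction formula so that the decreasing measure is expressible and the step is derivable without a predecessor law --- is precisely where the inductive strength of $\le$ is essential, and it mirrors the decreasing trace in the cyclic proof of the same statement.
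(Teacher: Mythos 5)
Your proposal follows essentially the same route as the paper's proof: the main induction is on a common bound $n$ for the two heads (base case closed by Lemma~\ref{lemma:less-than-zero} and $\zeroaxiom$, inductive step by case analysis on the shapes of $x,y$ discharged via $H_a$--$H_d$), followed by a separate argument that any two elements of $\Npredicate$ admit a common bound. The only divergence is minor --- the paper obtains the common bound by proving $\exists n.\,(n\ge x)\wedge(n\ge y)$ directly by a double induction on $x$ and $y$, whereas you derive it from comparability of $\le$ --- and the successor-cancellation subtlety you rightly flag in the inductive step (passing from $\s x''\le \s n'$ to a bound on $\s x'', x''$ without injectivity of $\s$) is present, unremarked, in the paper's own sketch, so your attempt is if anything more explicit about where the delicate bookkeeping lies.
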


\begin{proof}[Proof]  %[Proof of Theorem \ref{theorem:hydra-order}]
Let $\hat H = H_a,H_b,H_c,H_d$ be the list of 2-Hydra axioms in Def. \ref{definition:H}.
We will prove the equivalent sequent $\hat H, Nx,Ny \prove p(x,y)$.
We will first show $\forall n.  \ (n \ge x \land n \ge y \imp p(x,y))$
by induction on $n$.

\begin{itemize}
\item
Case 1: $n=0$. Then $x=y=0$ by Lemma \ref{lemma:less-than-zero}, therefore $p(x,y)$ by $H_a$.
\item
Case 2: $n=\s n'$.

\begin{itemize}
\item
  Sub-case 2.1: $y=0$.
  \begin{itemize}
  \item Sub-sub-case 2.1.1. $x=0$ or $x=\s 0$. By $H_a$.
  \item Sub-sub-case  2.1.2. $x=\s\s x''$. Then $p(\s x'',x'')$ by induction hypothesis, hence $p(x,0)$ by $H_d$.
  \end{itemize}
  
\item
  Sub-case 2.2. $y=\s 0$. By $H_a$.

\item 
  Sub-case 2.3. $y=\s\s y''$.
  \begin{itemize}
    \item Sub-sub-case  2.3.1. $x=0$. Then $p(\s y'',y'')$ by I.H., hence
      $p(0,y)$ by $H_c$.
    \item Sub-sub-case  2.3.2. $x=\s x'$. Then $p(x',y'')$ by I.H., therefore $p(x,y)$ by $H_b$.
    \end{itemize}
  \end{itemize} 
\end{itemize}

\noindent By principal induction on $x$ and secondary induction on $y$ we prove that $\exists n. (n \ge x) \wedge (n \ge y)$. From this statement and the previous one we conclude our claim.
\end{proof}

\section{The system $\CLKID$ and the Brotherston-Simpson conjecture}
\label{section:CLKID}
In this section we introduce more definitions and results of \cite{Brotherston11}, again in order to make the paper self-contained. We define an infinitary version of $\LKID$ called $\LKID^\omega$, then a subsystem $\CLKID$ of the latter called the system of cyclic proofs in \cite{Brotherston11}. We give a cyclic proof of $2$-Hydra, which is ours, and eventually we state the Brotherston-Simpson conjecture. 

The proof rules of the infinitary system $\LKID^\omega$  are the rules of $\LKID$, except the induction rules for each inductive predicate. The induction rule $(\mbox{Ind }P_i)$ of $\LKID$ is replaced by the case-split rule:
$$
\infer[(\mbox{Case  }P_i )]{\Gamma,P_i\vec{u}\vdash \Delta}{\mbox{case distinctions}}
$$
with case distinctions defined as follows. 
For each production having predicate $P_i$  in its conclusion:
$$
\infer{P_i\vec{t}[\vec{\vec{x} }]}
{Q_1 \vec{u} _1[\vec{x} ] \ \ldots \ Q_h \vec{u} _h [\vec{x} ] P_{j_1}   \vec{t} _1 [\vec{x} ] \ 
\ldots  \ P_{j_m}  \vec{t} _m [\vec{x} ]}
$$

there is a case distinction
$$
\Gamma, \vec{u}  = \vec{t} [\vec{y} ],Q_1\vec u_1[\vec{y} ],  \ldots  ,Q_h \vec{u} _h [\vec{y} ], 
P_{j_1}\vec{t} _1 [\vec{y} ],  \ldots  , P_{j_m} \vec{t} _m [\vec{y} ] \vdash \Delta
$$
where $\vec{y}$ is a vector of distinct variables of the same length as $\vec{x}$, and $\vec{y} \cap V = \emptyset$ for $V = FV(\Gamma \cup  \Delta \cup  \{P_i\vec{u} \mid  i=1, \ldots, n\}) $.

The formulas $P_{j_1}\vec{t} _1 [\vec{y} ],  \ldots  , P_{j_m} \vec{t} _m [\vec{y} ]$ occurring in a case distinction are said to be case-descendants of the principal formula $P_i\vec{u}$.

The case-split rule for N is
$$
\infer[x \not \in \FV(\Gamma \cup   \Delta \cup \{Nt\}) \  (\mbox{Case }N)]
{\Gamma,Nt \vdash \Delta}
{\Gamma,t=0\vdash \Delta & \Gamma,t=sx,Nx\vdash\Delta}
 $$

The formula $Nx$ occurring in the right hand premise is the only case-descendant of the formula $Nt$
occurring in the conclusion.

The system $\LKID^\omega$  is based upon infinite derivation trees. We
distinguish between `leaves' and `buds' in derivation trees. By a leaf we mean an axiom, i.e. the
conclusion of a 0-premise inference rule. By a bud we mean a sequent occurrence in the tree that
is not the conclusion of a proof rule. 

\begin{defi}[$\LKID^\omega$  Pre-proof]
An $\LKID^\omega$  pre-proof of a sequent $\Gamma \vdash \Delta$ is a (possibly infinite) derivation tree $\Pi$, constructed
according to the proof rules of $\LKID^\omega$, such that $\Gamma \vdash \Delta$ is the root of $\Pi$ and $\Pi$ has no buds.
\end{defi}

$\LKID^\omega$  pre-proofs are not sound in general: there are 
pre-proofs of any invalid sequent.
The global trace condition is a condition on pre-proofs which ensures their soundness.

A (finite or infinite) path $\pi$ in a derivation tree $\Pi$ is a sequence $\pi = (S_i)_{0\le i<\alpha}$,
for some $\alpha  \in N\cup   \{ \infty \}$, of sequent occurrences in the tree such that $S_{i+1}$ is a child of $S_i$ for all $i+1<\alpha$.

\begin{defi}[Trace]
Let $\Pi$ be an $\LKID^\omega$  pre-proof and let $\pi = (\Gamma_i \vdash \Delta_i)_{i\ge 0}$ be an infinite path in $\Pi$. 
A trace following $\pi$ is
a sequence $\tau = (\tau_{i})_{i\in\Nat}$ such that, for all $i \in\Nat$:
\begin{enumerate}
\item  $\tau_{i}=P_{j_i} \vec{t}_i  \in \Gamma_i$, where $j_i  \in\{1, \ldots ,n\}$;
\item  if $\Gamma_i \vdash \Delta_i$ is the conclusion of $(\mbox{Subst})$ then $\tau_{i}=\tau_{i+1}[\theta ]$, where $\theta$  is the substitution associated with the instance of $(\mbox{Subst})$;
\item  if $\Gamma_i \vdash \Delta_i$ is the conclusion of $(=\!\!\!L)$ with principal formula $t=u$ then there is a formula $F$ and variables $x,y$ such that $\tau_{i}=F[t/x,u/y]$ and $\tau_{i+1}=F[u/x,t/y]$;
\item  if $\Gamma_i \vdash \Delta_i$ is the conclusion of a case-split rule then either \emph{(a)} $\tau_{i+1}=\tau_{i}$ or \emph{(b)} $\tau_{i}$ is the principal formula of the rule instance and $\tau_{i+1}$ is a case-descendant of $\tau_{i}$. In the latter case, $i$ is said to be a progress point of the trace;
\item  if $\Gamma_i \vdash \Delta_i$ is the conclusion of any other rule then $\tau_{i+1}=\tau_{i}$.
\end{enumerate}
\end{defi}

An infinitely progressing trace is a trace having infinitely many progress points.

\begin{defi}[$\LKID^\omega$ proof]
An $\LKID^\omega$  pre-proof $\Pi$ is defined to be an $\LKID^\omega$  proof if it satisfies the following global trace condition: for every infinite path $\pi = (\Gamma_i \vdash \Delta_i)_{i\ge 0}$ in $\Pi$, there is an infinitely progressing trace following some tail of
the path, $\pi' = (\Gamma_i \vdash \Delta_i)_{i\ge k}$, for some $k\ge 0$.
\end{defi}

Cyclic proofs are a subsystem $\CLKID$ of $\LKID^\omega$, defined by
restricting $\LKID^\omega$  to proofs given by regular trees, i.e. those (possibly infinite) trees with only
finitely many distinct subtrees. 

Proofs of $\CLKID$ are called cyclic proofs and are represented as finite graphs.

\begin{defi}[Companion]
Let $B$ be a bud of a finite derivation tree $\Pi$. A node $C$ in $\Pi$ which is conclusion of some rule is said to be a companion for $B$ if $C$ and $B$ are the same sequent.
\end{defi}

\begin{defi}[Cyclic pre-proof]
A C$\LKID^\omega$  pre-proof $\Pi$ of $\Gamma \vdash \Delta$ is a pair $(\Pi,R)$, where $\Pi$ is a finite derivation tree constructed according to the rules of $\LKID^\omega$  and whose root is $\Gamma \vdash \Delta$, and $R$ is a function assigning a companion to every bud node in $\Pi$.
\end{defi}

By unfolding a cyclic pre-proof to its associated (possibly infinite) tree, cyclic
pre-proofs generate exactly the class of $\LKID^\omega$  pre-proofs given by the regular derivation trees.

\begin{defi}[Cyclic proof]
A $\CLKID$  proof is defined as a $\CLKID$  pre-proof such that its unfolding satisfies the global trace condition.
\end{defi}

\subsection{Proof of 2-Hydra Statement in Cyclic-Proof System}
\label{section-cyclic-proof}
The logical systems  $\LKID(\Sigma_N,\Phi_N)$ and $\CLKID(\Sigma_N,\Phi_N)$ are
the systems $\LKID$ and $\CLKID$ with the signature $\Sigma_N$ and 
the set of production rules $\Phi_N$.
In this subsection we give an example of a cyclic proof: a cyclic proof of the $2$-Hydra statement in $\CLKID(\Sigma_N,\Phi_N)$. 

\begin{thm}\label{theorem:cyclic}
The 2-Hydra statement $H$ is provable in $\CLKID(\Sigma_\Npredicate, \Phi_\Npredicate)$.
\end{thm}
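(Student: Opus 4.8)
The plan is to prove the equivalent sequent $\hat H, \Npredicate x, \Npredicate y \vdash p(x,y)$, where $\hat H = H_a,H_b,H_c,H_d$; the full statement $H$ then follows by $(\wedge L)$, $(\forall R)$ and $(\rightarrow R)$, exactly as in the proof of Theorem~\ref{theorem:hydra-order}. I take this sequent as the \emph{companion} $C$, which is also the root of the cyclic proof. Reading upwards from $C$, I apply $(\mbox{Case }\Npredicate)$ to $\Npredicate y$, to $\Npredicate x$, and to the resulting predecessor atoms, so as to separate the situations of the $2$-Hydra game of Section~\ref{section:LKID}: the winning positions $p(0,0)$, $p(\s 0,0)$ and $p(x,\s 0)$ are closed at once by an instance of $H_a$ (via $(\forall L)$ and the axiom rule), while the three productive positions are reduced by instantiating the appropriate axiom. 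On $p(0,\s\s y')$ I use $H_c$ and $(\rightarrow L)$ to reduce the goal to $p(\s y',y')$; on $p(\s\s x',0)$ I use $H_d$ to reduce to $p(\s x',x')$; and on $p(\s x',\s\s y')$ I use $H_b$ to reduce to $p(x',y')$. In each productive case the reduced goal is, after re-deriving the needed $\Npredicate$-atoms by the right-introduction rule for $\Npredicate$ and trimming the context with $(\mbox{Wk})$, an instance $C[\theta]$ of the companion; one application of $(\mbox{Subst})$ then yields a bud identical to $C$. This produces a $\CLKID$ pre-proof with a single companion and three back-edges.

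It remains to verify the global trace condition, which is the heart of the matter. The intended well-founded measure is the maximum of the two head-lengths: the reductions $H_b$, $H_c$, $H_d$ each decrease this maximum by one or two, as recorded in the table of transformations in Section~\ref{section:LKID}. Accordingly I would let the trace follow the $\Npredicate$-atom naming the \emph{currently longer head}. The key local fact is that in every reduction the atom of the new longer head is a case-descendant of the atom of the old longer head: for $H_b$ one simply decomposes the longer argument once or twice; for $H_d$ the first head remains longest and is decomposed; and for $H_c$ the new longest head $\s y'$ is a case-descendant of the old longest head $\s\s y'$. Hence each traversal of a back-edge contains at least one progress point on this trace, and continuity across the $(\mbox{Subst})$ back-edge holds because the substitution $\theta$ maps the atom tracked at the bud to exactly the new-longer-head atom.

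The main obstacle is precisely that the longer head \emph{switches argument position}, whereas a trace may never jump to an unrelated atom. Two points need care. At $H_c$ the longer head moves from the second argument to the first; and at both $H_c$ and $H_d$ the instantiation of the inductive axiom forces a decomposition one step deeper than the new longest head, so a naive trace would \emph{overshoot} onto the shorter argument. I would keep the trace exactly on the new longest head by retaining, via contraction, a copy of the intermediate atom (e.g.\ $\Npredicate(\s y')$ for $H_c$) and decomposing only the other copy to obtain the atom needed to instantiate $H_c$ or $H_d$. Getting this bookkeeping right, so that the trace genuinely stays on the maximal head across the switch, is the delicate technical step of the proof.

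With this in hand the argument closes combinatorially. Along any infinite path the productive steps form an infinite word over $\{H_b,H_c,H_d\}$; a step $H_c$ must be entered on the second head (the first-head lineage dies at $x=0$) and a step $H_d$ must be entered on the first head (the second-head lineage dies at $y=0$), while $H_b$ preserves the head and the overshoot-handling above lets $H_c$ and $H_d$ leave the trace on \emph{either} head. A simple left-to-right assignment then shows that for every infinite path one can choose a single continuous trace meeting the entry constraint at each $H_c$ and $H_d$; since every back-edge contributes a progress point, this trace has infinitely many progress points. This establishes the global trace condition, and hence that the pre-proof is a genuine $\CLKID(\Sigma_\Npredicate,\Phi_\Npredicate)$ proof of $2$-Hydra.
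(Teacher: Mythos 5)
Your proposal is correct and follows essentially the same route as the paper's proof: the same sequent $\hat H, \Npredicate x,\Npredicate y\vdash p(x,y)$ as root and unique companion, the same case-split structure producing three buds via $H_b,H_c,H_d$, the same available traces ($x\leadsto x,y$ for the $H_d$ bud, $y\leadsto x,y$ for the $H_c$ bud, $x\leadsto x$ and $y\leadsto y$ for the $H_b$ bud), and the same look-ahead switching argument for the global trace condition. The ``retain a copy of the intermediate atom via contraction'' device you flag as the delicate step is exactly what the paper's set-based sequents with implicit contraction provide (e.g.\ the retained $\Npredicate{\s}y''$ in $\Pi_2$), and it is the right mechanism --- note only that the traced atoms must arise this way as case-descendants, not by cutting in a right-introduction of $\Npredicate$, since a cut-in atom cannot carry a trace.
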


\begin{proof}
Let the 2-Hydra axioms $\hat H$ be $H_a,H_b,H_c,H_d$ as in Definition \ref{definition:H}.

For simplicity, we will write $p(x,y)$ as $pxy$,
and we will write the use of $2$-Hydra axioms 
by omitting (Cut), 
$(\imp\ R)$,
$(\forall\ R)$,
(Axiom),
as in the following example.
\[
	\infer{\hat H,N{\s} y'',Ny'' \prove p0\s\s y''}{
	\hat H,N{\s} y'',Ny'' \prove p\s y''y''
	}
\]
Rule (=L) is left-introduction of equality: from $\Gamma[a,b] \vdash \Delta[a,b]$ prove $a=b,\Gamma[b,a] \vdash \Delta[b,a]$. We will  write a combination of (Case) and (=L) 
as one rule in the following example.
\[
\infer[Nx]{\hat H,Nx \prove px0}{
	\hat H, N0 \prove p00
	&
	\hat H,Nx' \prove p\s x'0
}
\]

For saving space, we omit writing $\hat H$ in every sequent in the next proof figure.
For example, $Nx,Ny \prove pxy$ actually denotes
$\hat H,Nx,Ny \prove pxy$.

We define a cyclic proof $\Pi$ of $Nx,Ny \prove pxy$, where the mark \emph{(a)} denotes the bud-companion relation (there are three buds, the only companion is the root). $\Pi$ is:
\\
\[
\infer[Ny]
 {\emph{\bf (a)} Nx,Ny \prove pxy}
 {
  \infer{Nx \prove px0}{\deduce{\Pi_1}{\ldots \emph{\bf { (a)}} \ldots \ } }
  \ \ \ 
  & 
  \ \ \ 
  \infer{Nx,Ny' \prove px\s y'}{\deduce{\Pi_2}{\ldots \emph{\bf { (a)}} \ldots \emph{\bf { (a)}} \ldots \  } }
 }
\]
\\
where the left sub-proof $\Pi_1$ is:
\\
\[
\infer[Nx]
{Nx \prove px0}
{
	\infer{N0 \prove p00}{} \ \ \
	&
	\kern -0.5cm 
	\ \ \
	\infer[Nx']{Nx' \prove p\s x'0}
	{
		\infer{N0 \prove p10}{}
		&
		\infer
		{N{\s} x'',Nx'' \prove p\s\s x''0}
		{\infer{N{\s} x'',Nx'' \prove p\s x''x''} {\emph{\bf  (a)} Nx,Ny \prove pxy}}
	}
}
\]
\\
and the right sub-proof $\Pi_2$ is:
\\
\[
{
\infer[Nx]
{Nx,Ny' \prove px\s y'}
{
   \infer{N0,Nx \prove px1}{}
	&
	\infer[Nx]
	 {N{\s} y'',Nx,Ny'' \prove px\s\s y''}
	 {
		\infer
		{N{\s} y'',Ny'' \prove p0\s\s y''}
		{
		  \infer
		  {N{\s} y'',Ny'' \prove p\s y''y''}
		  {\emph{\bf  (a)} Nx,Ny \prove pxy}
		}
		&
		\infer
		{N{\s} y'',Nx',Ny'' \prove p\s x'\s\s y''}
		{
		  \infer
		  {Nx',Ny'' \prove px'y''}
		  {\emph{\bf  (a)} Nx,Ny \prove pxy}
		}
	}
}
}
\]

\paragraph{\emph{$\Pi$ is a cyclic proof.}}
We only have to check: the global trace condition holds for any infinite path $\pi$ in the cyclic proof $\Pi$ above. We can explicitly describe an infinite trace in $\pi$, as follows. We have three possible choices for constructing the infinite path $\pi$ in the proof: taking the bud in the left, middle, or right of the proof. For a given bud and $z_1,z_2 \in \{x,y\}$, we write $z_1 \leadsto z_2$ for a \emph{progressing}
trace from ${\Npredicate}z_1$ in the companion to ${\Npredicate}z_2$ in the bud. We write $z_1
\leadsto z_2,z_3$ for $z_1 \leadsto z_2$ and $z_1 \leadsto z_3$.
For the left bud, there are $x \leadsto x,y$.
For the middle bud, there are $y \leadsto x,y$.
For the right bud, there are both $x \leadsto x$ and $y \leadsto y$. We argue by cases.
\begin{enumerate}
\item
Assume that from some point on the left bud does not appear in a path. Then from this point there is an infinitely progressing trace $y \leadsto y \leadsto y \leadsto \ldots$.
\item
Assume that the middle bud from some point on does not appear in a path. Then from this point there is an infinitely progressing trace $x \leadsto x \leadsto x \leadsto \ldots$. 
\item
Assume that the left and middle buds appear infinitely many times in a path. Start from $x$ and the left bud if the left bud comes first, and from $y$ and the middle bud if the middle bud comes first, then repeat infinitely one of following operations, according to the current bud.
Take $x \leadsto x$ for all left buds, except for the last left bud before
the middle bud comes. Take $x \leadsto y$ for this bud.
Take $y \leadsto y$ for all middle buds, except for the last middle bud before
the left bud comes. Take $y \leadsto x$ for this bud.

In both cases, take $x \leadsto x$ or $y \leadsto y$ for the right bud, depending if the previous trace was ending in $x$ or in $y$. Also in this case we defined an infinitely progressing trace starting from some tail of the path, passing infinitely many times though $x$ and though $y$.
\end{enumerate}
Hence the global trace condition holds.
\end{proof}

\subsection{The Brotherston-Simpson Conjecture}

$\LKID$ has been often used for formalizing inductive definitions,
while $\CLKID$ is another way for formalizing the same inductive definitions,
and moreover $\CLKID$ is more suitable for proof search.
This raises the question of the relationship between $\LKID$ and
cyclic proofs: Brotherston and Simpson conjectured the equality for
each inductive definition. The left-to-right inclusion is proved in
\cite{Brotherston-phd}, Lemma 7.3.1 and in
\cite{Brotherston11}, Thm. 7.6. 
The Brotherston-Simpson conjecture (the conjecture 7.7 in \cite{Brotherston11}) says
that the provability $\LKID$ includes that of $\CLKID$.
Simpson \cite{Simpson-2017} proved the conjecture in the case of Peano Arithmetic.
The goal of this paper is to prove that the conjecture is false in general, by showing that there is no proof of $2$-Hydra in $\LKID(\Sigma_N,\Phi_N)$.

\section{The Structure $\setM$ for the Language $\Sigma_\Npredicate$}
\label{section:counter-model}

In this section we define a structure $\setM$ for the language
$\Sigma_\Npredicate$, we prove that $\setM$ falsifies the $2$-Hydra
statement $H$, and we characterize the subsets of $\setM$ which
satisfy the induction schema
(definition \ref{definition-induction-schema}). $\setM$ is not a standard
model of $\LKID$ (in any standard model $2$-Hydra would be true). In
the next sections we will prove that $(\setM,\setH_\setM)$ is a Henkin
model of $\LKID$, where $\setH_\setM$ was defined as the set of definable
sets in $\setM$ (definition \ref{definition-definable-predicates}).

\subsection{Outline for Proof of Non-Provability}

In this section we define a counter model $\setM$, whose predicates are the equality relation and a partial bijection relation (a one-to-one correspondence between some subsets of the universe for $\setM$).
We prove that $(\setM,\setH_\setM)$ 
is a Henkin model of $\LKID(\Sigma_\Npredicate, \Phi_\Npredicate)$ if we take $\setH_\setM$ as the set of definable sets of the theory of $\setM$  (definition \ref{definition-definable-predicates}). In fact, we will prove that $\setM$ satisfies the induction schema for $\Npredicate$  (definition \ref{definition-induction-schema}). 

On one hand, we prove that in our structure $\setM$ all definable sets of $\setM$ (that is, all unary definable predicates of $\setM$) are all sets we obtain by adding/removing finitely many elements to: the set $\universe{\setM}$, the set of even numbers in $\universe{\setM}$, the set of multiples of four in $\universe{\setM}$, and so forth, together with their translations and their finite unions. All these sets have measure of a dyadic rational, which is a rational of the form $z/2^m$ for some $z \in \Zeta$, $m\in \Nat$.
This claim about the measure of definable sets of $\setM$ is derived as a particular case of a 
quantifier-elimination result (section \ref{section:quantifier-elimination}), in
which we characterize the sets which are first order definable from a set of
partial bijections closed under composition and inverse (section
\ref{section:partial-bijections}). This result is new, as far as we know. For an
introduction to quantifier-elimination we refer to~\cite[section 3.1, section 3.2]{Enderton00}.

On the other hand, section \ref{subsection:measure} shows
that a definable set of $\setM$ with dyadic measure satisfies the induction schema for $\Npredicate$ (definition \ref{definition-induction-schema}).
Combining them, finally we will show that
$\setM$ satisfies the induction schema for $\Npredicate$ and therefore $(\setM,\setH_\setM)$ is 
a Henkin model of $\LKID(\Sigma_\Npredicate, \Phi_\Npredicate)$.

\subsection{Definition of the Structure $\setM$}
\label{subsection:counter-model}
Let $\Zeta$ be the set of integers. We first define the structure
$\setM$ for $0,S,p$ and $\Npredicate$. 
Later we will define the interpretation of
the predicate $p$ in $\setM$. We denote the universe of $\setM$ by
$\universe{\setM}$ and we set:

\begin{defi}[The sets $\universe{\setM}$ and $\Npredicate_\setM$]
\label{definition-Npredicate}\mbox{}
\begin{enumerate}
\item
$\universe{\setM}=\Nat + \Zeta = \{(1,x) \mid  x \in \Nat\} \cup \{(2,x) \mid  x \in \Zeta\}$ (the disjoint union of $\Nat$ and $\Zeta$).
\item
$0_\setM = (1,0)$ and ${\s}_\setM(x,y) = (x,y+1)$.
\item
$\Npredicate_\setM = \universe{\setM}$. 
\end{enumerate}
\end{defi}
For all $n \in \Nat$ we set: $(x,y) + n = (x,y+n)$,
and $\zeroZ = (2,0)$ (the element $0$ in the component $\Zeta$),
and $\zeroZ - n = (2,-n)$ (the relative integer $-n$ in the component $\Zeta$).
We define the following subsets of $\universe{\setM}$:
$\NatModel = \{0_\setM + n \mid  n \in \Nat\}$
and
$\ZetaModelMinus = \{\zeroZ - (n+1) \mid  n \in \Nat\}$
and
$\ZetaModelPlus = \{\zeroZ + n \mid  n \in \Nat\}$.
The sets $\NatModel$, $\ZetaModelMinus$, $\ZetaModelPlus$ are a partition of $\universe{\setM}$.

By construction $\setM$ satisfies the closedness of $\Npredicate$ under $0$ and $\s$, and the $(0,{\s})$-axioms. $\setM$ is not a standard model of $\LKID$ because the intersection of all subsets of $\universe{\setM}$ closed under under $0$ and $\s$ is $\NatModel \subset \universe{\setM}$, while $\Npredicate_\setM = \universe{\setM}$.

Even if $\setM$ is not a standard model of $\LKID$, one can extend $\setM$ to a Henkin model $(\setM,\setH)$ of $\LKID(\Sigma_N, \Phi_N)$, provided that one can identify a suitable Henkin class $\setH$ of sets in $\setP(\universe{\setM})$
that meets the Henkin closure conditions, and show that $\Npredicate_\setM$ is the
least prefixed point of $\phi_{\Phi_N}$ within
this class. We do it by adding to $\setM$ the interpretation $p_\setM$ of the binary predicate $p$, and taking $\setH$ to be $\setH_\setM$, the set of all sets first order definable from $0, \s, =, \Npredicate, p$  (definition \ref{definition-definable-predicates}). 
We first define a (non-empty) set of points in which the instances of $2$-Hydra will be false in $\setM$. Let $r = \{(n,2n) \mid  n \in \Nat \}$. $r$ is the set of points of the straight line $y = 2x$ which are in $\Nat \times \Nat$. We imagine $r$ starting from the infinity, moving at each step from some $({\s}a,\s \s b)$ to $(a,b)$, and ending at $(0,0)$.
Given $(m_1, m_2) \in \universe{\setM} \times \universe{\setM}$ we define $(m_1, m_2) + r = \{(m_1 + a, m_2 + b) \mid  (a,b) \in r\}$ and $(m_1, m_2) - r = \{(m_1 - a, m_2 - b) \mid  (a,b) \in r\}$. We define three paths in $\universe{\setM} \times \universe{\setM}$ by $\pi_1 = (0_\setM, \zeroZ) + r $ and $\pi_2 = (\zeroZ,0_\setM) + r $  and $\pi_3 =(\zeroZ-1, \zeroZ-2) - r$. Then $\pi_1 \cup \pi_2 \cup \pi_3$ is the set of points in which $2$-Hydra will be false in the model.

Informally, the reason is that we can move forever along $\pi_1 \cup \pi_2 \cup \pi_3$ 
while ``cutting heads'', and we never reach a winning condition. Here is an example, where we write $\mapsto$ for a single move of the game, and we use the clause $H_c$ for a head duplication. The infinite sequence of moves is: 
\\
$\ldots \mapsto (0_\setM+2, \zeroZ +4) \mapsto (0_\setM+1, \zeroZ+2) \mapsto  (0_\setM, \zeroZ) \mapsto $ \emph{(head duplication, by the clause $H_c$)} $ (\zeroZ -1, \zeroZ - 2) \mapsto (\zeroZ - 2, \zeroZ - 4) \mapsto  \ldots $. In the next figure we represent $\pi_1 \cup \pi_2 \cup \pi_3$ in $\universe{\setM} \times \universe{\setM}$:

\begin{center}

\setlength{\unitlength}{1.5cm}

\begin{picture}(3,3)

\definecolor{darkred}{rgb}{0.5,0,0}
\definecolor{lightred}{rgb}{1,0.5,0.5}

{\bf
\put(2,2.1){$(\zeroZ,\zeroZ)$}
\put(-1,2){$(0_\setM,\zeroZ)$}
\put(1.05,0.15){$(\zeroZ,0_\setM)$}

\put(-0.4,0.5){$\Nat$}
\put(-0.4,1.5){$\Zeta^{-}$}
\put(-0.4,2.5){$\Zeta^{+}_{0}$}

\put(0.5,-0.3){$\Nat$}
\put(1.5,-0.3){$\Zeta^{-}$}
\put(2.5,-0.3){$\Zeta^{+}_{0}$}
}

\thinlines
\put(1,0){\line(0,1){3}}
\put(2,0){\line(0,1){3}}
\put(0,1){\line(1,0){3}}
\put(0,2){\line(1,0){3}}

\thicklines
\put(0,0){\line(1,0){3}}
\put(0,0){\line(0,1){3}}

\color{red}
\put(0.5,3){\vector(-1,-2){0.5}}
\put(2.5,1){\vector(-1,-2){0.5}}
\put(1.9,1.8){\vector(-1,-2){0.4}}

\color{lightred}
\thinlines
\qbezier[160](0,2)(1,2.5)(1.9,1.8)
\qbezier[160](2,0)(2.5,1)(1.9,1.8)

\color{red}
\put(0.35,2.5){$\pi_1$}
\put(2.25,0.3){$\pi_2$}
\put(1.40,1.4){$\pi_3$}

\put(0.3,1.8){$\rightsquigarrow$ head dupl.}

\begin{sideways}
\put(0.4,-2.2){$\rightsquigarrow$ head dupl.}
\end{sideways}

\end{picture}

\end{center}

\mbox{}
\\
Eventually, we set $$p_\setM = \universe{\setM}^2 \setminus (\pi_1 \cup \pi_2 \cup \pi_3)$$ We already defined the set $\Npredicate_{\setM}$ as $\universe{\setM}$ (definition \ref{definition-Npredicate}). We complete the definition of $\setM$ by:

\begin{defi}[The structure $\setM$]
$\setM = \langle \universe{\setM}, 0_\setM, \s_\setM, \Npredicate_\setM, p_\setM\rangle$
\end{defi}

In the following sections we will check that $\Npredicate_\setM$ is the least prefixed point of $\phi_{\Phi_\Npredicate}$ restricted to the Henkin family $\setH_\setM$ (definition \ref{definition-definable-predicates}), and therefore that $(\setM,\setH_\setM)$ is a Henkin model. In this section we check that $H$ is false in $\setM$.

\begin{lem}[The $2$-Hydra Lemma]
\label{lemma:hydra}
$\setM \not \models H$
\end{lem}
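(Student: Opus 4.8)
The plan is to read $H$ as the implication $H_a\wedge H_b\wedge H_c\wedge H_d \implies \forall x,y\in\Npredicate.\ p(x,y)$ and to show that in $\setM$ all four antecedents hold while the conclusion fails. The conclusion fails immediately: since $\Npredicate_\setM=\universe{\setM}$ and $p_\setM=\universe{\setM}^2\setminus(\pi_1\cup\pi_2\cup\pi_3)$, the sentence $\forall x,y\in\Npredicate.\ p(x,y)$ is already false because $\pi_1\cup\pi_2\cup\pi_3$ is nonempty (for instance $(0_\setM,\zeroZ)\in\pi_1$). So the whole content is the verification of $H_a,\dots,H_d$, and each of these reduces to a combinatorial statement about membership in the three paths.

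I would first record the paths in the additive coordinates already fixed for $\setM$:
\[
\begin{array}{l}
\pi_1 = \{(0_\setM + n, \zeroZ + 2n) \mid n \in \Nat\}, \quad \pi_2 = \{(\zeroZ + n, 0_\setM + 2n) \mid n \in \Nat\}, \\
\pi_3 = \{(\zeroZ - 1 - n, \zeroZ - 2 - 2n) \mid n \in \Nat\}.
\end{array}
\]
Because $p_\setM$ is the complement of $\pi_1\cup\pi_2\cup\pi_3$, each clause of the form $p(\vec a)\implies p(\vec b)$ is equivalent to the contrapositive stated on the paths: if $\vec b\in\pi_1\cup\pi_2\cup\pi_3$ then $\vec a$ lies on a path. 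Thus $H_b$ says that for all $x,y$, if $(\s x,\s\s y)$ is on a path then so is $(x,y)$; $H_c$ says that if $(0_\setM,\s\s y)$ is on a path then so is $(\s y,y)$; and $H_d$ says that if $(\s\s x,0_\setM)$ is on a path then so is $(\s x,x)$. The clause $H_a$ is not an implication and is checked directly: the points $(0_\setM,0_\setM)$ and $(\s 0_\setM,0_\setM)$, and every point $(x,\s 0_\setM)$, avoid all three paths, since the second coordinate of a path point is either in the $\Zeta$-part ($\pi_1,\pi_3$) or an even element of the $\Nat$-part ($\pi_2$), whereas $0_\setM$ occurs as a second coordinate only in $\pi_2$ at $(\zeroZ,0_\setM)$ (whose first coordinate is $\zeroZ$, not $0_\setM$), and $\s 0_\setM$ is an odd element of the $\Nat$-part.

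For $H_b$ I would compute, path by path, the point $(x,y)$ such that $(\s x,\s\s y)$ equals a given path point. On $\pi_1$ this predecessor of $(0_\setM+n,\zeroZ+2n)$ is $(0_\setM+(n-1),\zeroZ+2(n-1))\in\pi_1$ for $n\ge 1$, while for $n=0$ there is no $x$ with $\s x=0_\setM$; $\pi_2$ is symmetric. On $\pi_3$ the predecessor of $(\zeroZ-1-n,\zeroZ-2-2n)$ is $(\zeroZ-1-(n+1),\zeroZ-2-2(n+1))\in\pi_3$, and it always exists because the successor is a bijection on the $\Zeta$-part. For $H_c$, a point $(0_\setM,\s\s y)$ can lie on a path only on $\pi_1$ (the only path with first coordinate in the $\Nat$-part) and there only at its endpoint $(0_\setM,\zeroZ)$, forcing $\s\s y=\zeroZ$, i.e. $y=\zeroZ-2$; then $(\s y,y)=(\zeroZ-1,\zeroZ-2)$ is the $n=0$ point of $\pi_3$. $H_d$ is the mirror statement linking the endpoint $(\zeroZ,0_\setM)$ of $\pi_2$ to the same point of $\pi_3$. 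These two are exactly the head-duplication steps drawn in the figure.

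I expect the only real subtlety to be $H_b$, together with the way it dovetails with $H_c$ and $H_d$. The implication to be checked quantifies over all $x,y$, but the points that actually occur as $(\s x,\s\s y)$ are exactly those avoiding $0_\setM$ in the first coordinate and avoiding $0_\setM,\s 0_\setM$ in the second; consequently the two endpoints $(0_\setM,\zeroZ)\in\pi_1$ and $(\zeroZ,0_\setM)\in\pi_2$ are never of that form and so impose no constraint on $H_b$. This is precisely why the backward move halts there, and it is no accident: these are exactly the two points captured instead by $H_c$ and $H_d$, which re-route them into $\pi_3$, a path closed under predecessors because it runs to $-\infty$ along the $\Zeta$-part. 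Confirming that these boundary cases match up is the heart of the argument; everything else is routine coordinate arithmetic.
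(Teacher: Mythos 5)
Your proposal is correct and follows essentially the same route as the paper: falsify the conclusion because $\pi_1\cup\pi_2\cup\pi_3$ is nonempty, then verify $H_a$ directly and $H_b,H_c,H_d$ via the contrapositive with a case analysis on which path the consequent's point lies on, computing the predecessor point in each case. Your explicit remark that the endpoints $(0_\setM,\zeroZ)$ and $(\zeroZ,0_\setM)$ are never of the form $(\s x,\s\s y)$ and are instead handled by $H_c$ and $H_d$ is a nice clarification of the boundary cases that the paper treats only implicitly.
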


\begin{proof}[Proof]%[The $2$-Hydra Lemma \ref{lemma:hydra}]
By Def.\ \ref{definition:H}, $H = (H_a,H_b,H_c,H_d \implies \forall x,y \in \Npredicate. \ p(x,y))$. We have $\setM \not \models \forall x,y \in \Npredicate. \ p(x,y)$ because $\Npredicate_\setM = \universe{\setM}$ while $p_\setM \subset \universe{\setM}^2$. In order to prove $\setM \not \models H$, we have to prove that $\setM \models H_a,H_b,H_c,H_d$. 
\begin{enumerate}

\item
$\setM \models H_a$. We have to prove that for all $x \in \universe{\setM}$ we have: $(0_\setM, 0_\setM), (0_\setM + 1, 0_\setM), (x, 0_\setM + 1) \in p_\setM$, that is: $(0_\setM, 0_\setM), (0_\setM + 1, 0_\setM), (x, 0_\setM + 1) \not \in\pi_1 \cup \pi_2 \cup \pi_3$. For all $n,m \in \Nat$, the sets $\pi_2 \cup \pi_3$ include no point of the form $(0_\setM +n, 0_\setM+m)$: this proves $(0_\setM, 0_\setM), (0_\setM + 1, 0_\setM) \not \in\pi_1 \cup \pi_2 \cup \pi_3$. We have $(x, 0_\setM + 1) \not \in \pi_1 \cup \pi_3$ because all points in $\pi_1, \pi_3$ have the second coordinate of the form $\zeroZ + z$ for some $z \in \Zeta$. We have $(x, 0_\setM + 1) \not \in \pi_2$ because all points in $\pi_2$ have the second coordinate of the form $0_\setM + 2n$ for some $n \in \Nat$. 

\item
$\setM \models H_b$. We have to prove that for all $a, b \in \universe{\setM}$ if $\setM \models p_\setM(a,b)$ then $p_\setM({\s}_\setM(a), {\s}_\setM{\s}_\setM(b))$, that is: $(a,b) \not \in \pi_1 \cup \pi_2 \cup \pi_3$ implies $({\s}_\setM(a), {\s}_\setM{\s}_\setM(b)) \not \in \pi_1 \cup \pi_2 \cup \pi_3$. By taking the contrapositive, this is equivalent to show: $({\s}_\setM(a), {\s}_\setM{\s}_\setM(b)) \in \pi_1 \cup \pi_2 \cup \pi_3$ implies $(a,b) \in \pi_1 \cup \pi_2 \cup \pi_3$. We argue by cases. 
Assume $({\s}_\setM(a), {\s}_\setM{\s}_\setM(b)) \in \pi_1$. Then $({\s}_\setM(a), {\s}_\setM{\s}_\setM(b)) = (0_\setM +n+1, \zeroZ + 2n+2)$ for some $n \in \Nat$, hence $(a,b) = (0_\setM +n, \zeroZ + 2n) \in \pi_1$. 
Assume $({\s}_\setM(a), {\s}_\setM{\s}_\setM(b)) \in \pi_2$. Then $({\s}_\setM(a), {\s}_\setM{\s}_\setM(b)) = (\zeroZ -1 - n, \zeroZ -2 - 2n)$  for some $n \in \Nat$, hence $(a,b) = (\zeroZ -1-1-n, \zeroZ -2-2-n) \in \pi_2$. 
Assume $({\s}_\setM(a), {\s}_\setM{\s}_\setM(b)) \in \pi_3$. Then $({\s}_\setM(a), {\s}_\setM{\s}_\setM(b)) = (\zeroZ +n+1, 0_\setM + 2n+2)$  for some $n \in \Nat$, hence $(a,b) = (\zeroZ +n, 0_\setM + 2n) \in \pi_3$. 

\item
$\setM \models H_c$. 
We have to prove that for all $b \in \universe{\setM}$ if $\setM \models p_\setM({\s}_\setM (b), b)$ then $p_\setM(0_\setM, {\s}_\setM{\s}_\setM(b))$, that is: $({\s}_\setM b, b) \not \in \pi_1 \cup \pi_2 \cup \pi_3$ implies $(0_\setM, {\s}_\setM{\s}_\setM(b)) \not \in \pi_1 \cup \pi_2 \cup \pi_3$. 
By taking the contrapositive, this is equivalent to show:$(0_\setM, {\s}_\setM{\s}_\setM(b)) \in \pi_1 \cup \pi_2 \cup \pi_3$ implies $({\s}_\setM (b), b) \in \pi_1 \cup \pi_2 \cup \pi_3$. 
We argue by cases. 
Assume $(0_\setM, {\s}_\setM{\s}_\setM(b)) \in \pi_1$. Then  $(0_\setM, {\s}_\setM{\s}_\setM(b)) = (0_\setM, \zeroZ)$, hence $({\s}_\setM (b), b) = (\zeroZ-1, \zeroZ-2) \in \pi_2$. 
Assume $(0_\setM, {\s}_\setM{\s}_\setM(b)) \in \pi_2 \cup \pi_3$. This cannot be, because all points in $\pi_2, \pi_3$ have the first coordinate of the form $\zeroZ + z$ for some $z \in \Zeta$.

\item
$\setM \models H_d$. 
It is similarly proved to the previous case.\qedhere
\end{enumerate}
\end{proof}

$\setH_\setM$ was defined as the set of definable sets in
$\setM$  (definition \ref{definition-definable-predicates}). 
We prove that $(\setM,\setH_\setM)$ is a Henkin model of $\LKID$.
We have to prove that $\Npredicate_\setM$ is the
smallest pre-fixed point in $\setH_1$ for $\phi_{\Phi_N}$ (definition \ref{definition:monotone-operator}). An equivalent condition is to prove the induction schema for $N$: $A[0/x],(\forall x.Nx, A \rightarrow A[x+1/x]) \rightarrow \forall x.(Nx \rightarrow A)$ for any $A \in L(\setM)$. Since the interpretation of $N$ is $\universe{\setM}$ itself, we have in fact to prove that all $X \in \setH_1$ which are closed under $0$ and $\s$ are equal to $\universe{\setM}$.

\subsection{The Measure for the Subsets of $\setM$ Closed Under $0$ and $\s$}
\label{subsection:measure}
In this subsection we define a sufficient condition for a subset of $\setM$ to satisfy the induction schema for $\Npredicate$, by using a finitely additive measure $\mu(X)$, defined on
subsets $X \subseteq \universe{\setM}$.
We will prove that
all definable subsets for $\setM$ satisfy this condition.

\begin{defi}[Measure for Subset of $\setM$]
For $X \subseteq \universe{\setM}$ we set:
$$\mu(X) = \lim_{x \rightarrow \infty} \frac{| \ \{0_\setM + n, \zeroZ-n, \zeroZ+n \in \universe{\setM} \mid  n \in [0,x] \cap N \} \cap X \ |}{3(x+1)}$$
whenever this limit exists.
\end{defi}

For instance, $\mu(\NatModel) = 1/3$ and if $E = \{0_\setM, 0_\setM + 2, \ldots, \zeroZ -2, \zeroZ, \zeroZ + 2, \ldots\}$, then $\mu(E) = 1/2$.
A \emph{dyadic rational} is any rational of the form $z/2^n$ for some $z \in \Zeta$, $n \in \Nat$. We prove that having a dyadic measure is a sufficient condition for a predicate $A[x]$ to satisfy the induction schema for $\Npredicate$, namely: $A[0], \forall x.\Npredicate x,A[x] \rightarrow A[\s x]) \rightarrow \forall x.(\Npredicate x \rightarrow A[x])$ (definition \ref{definition-induction-schema}). Later, we will prove that all definable predicates of $\setM$ have a dyadic measure, hence they satisfy the induction schema.

\begin{lem}[Measure Lemma]
\label{lemma:measure}
If $\mu(P)$ is a dyadic rational, then $P$ satisfies the induction schema for $\Npredicate$.
\end{lem}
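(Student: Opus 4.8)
The plan is to unpack what it means for a subset $P \subseteq \universe{\setM}$ to satisfy the induction schema for $\Npredicate$ in $\setM$, and then to compute $\mu(P)$ explicitly for every $P$ that could possibly violate it. Since $\Npredicate_\setM = \universe{\setM}$, the induction schema (Definition \ref{definition-induction-schema}) for $P$ collapses to the single implication: if $0_\setM \in P$ and $P$ is closed under $\s$, then $P = \universe{\setM}$. So, assuming $\mu(P)$ is a dyadic rational, I would suppose the antecedent holds (if it fails, the implication is vacuously true and there is nothing to prove) and aim to derive $P = \universe{\setM}$.

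First I would determine the shape of any $P$ closed under $0_\setM$ and $\s$. Starting from $0_\setM = (1,0)$ and iterating $\s$ forces $\NatModel \subseteq P$. On the $\Zeta$-component the successor has no least element to anchor it, so $P \cap \Zeta$ is merely an upward-closed subset of $\Zeta$; the only possibilities are $\emptyset$, a proper final segment $\{(2,y) \mid y \ge k\}$ for some $k \in \Zeta$, and all of $\Zeta$. In the last case $P = \universe{\setM}$ and we are done, so the real work is to rule out the first two under the dyadic hypothesis.

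The key step is the density computation. At stage $x$ the sample set consists of the $x+1$ points of $\NatModel$ together with the $2x+1$ points $\{(2,m) \mid -x \le m \le x\}$ of the $\Zeta$-line (the $\zeroZ + n$ and $\zeroZ - n$ rays overlap only at $\zeroZ$), so the denominator $3(x+1)$ agrees with the sample size up to a bounded constant. If $P \cap \Zeta = \emptyset$, the numerator is $x+1$ and $\mu(P) = 1/3$; if $P \cap \Zeta$ is a proper final segment $\{y \ge k\}$, it contributes $x-k+1$ further points, giving numerator $2x+2-k$ and $\mu(P) = 2/3$. Neither $1/3$ nor $2/3$ is of the form $z/2^m$, since $3$ divides no power of $2$; both therefore contradict the hypothesis that $\mu(P)$ is dyadic. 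Hence $P \cap \Zeta = \Zeta$ and $P = \universe{\setM}$, which is exactly the induction schema for $\Npredicate$.

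The argument is essentially a definition-unpacking followed by a short limit calculation, so I do not expect a serious obstacle. The one place to be careful is the bookkeeping of the three rays in $\mu$, and in particular the observation that closure under $\s$ does \emph{not} propagate downward through the $\Zeta$-component, which is precisely why a non-trivial final segment can survive and produce the tell-tale non-dyadic value $2/3$. I would also make explicit that for each such $P$ the limit defining $\mu(P)$ genuinely exists (as the explicit computations show), so that the non-dyadic values really do conflict with the standing dyadic hypothesis rather than the limit simply failing to converge.
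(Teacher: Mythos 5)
Your proposal is correct and takes essentially the same route as the paper: both arguments reduce to observing that a set closed under $0$ and $\s$ but not equal to $\universe{\setM}$ must contain $\NatModel$ and meet the $\Zeta$-component in either the empty set or a proper final segment, forcing $\mu(P)=1/3$ or $2/3$, neither of which is dyadic. The only cosmetic difference is that the paper argues by contraposition on the complement of $P$ while you argue directly on $P$.
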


\begin{proof}

In order to show the contraposition,
assume $P$ does not satisfy the induction schema.
Then
$P$ is closed under $0$, ${\s}$ (hence $P \supseteq
\NatModel$)
and there is some $a \in \universe{\setM} \setminus
P$. From $P \supseteq \NatModel$ we deduce that $a \not \in
\NatModel$, hence $a = \zeroZ + z$ for some $z \in \Zeta$. Let $S_a =
\{a, a-1,a-2,a-3, \ldots\}$: by the contrapositive of closure under
$\s$, we deduce that $S_a \subseteq \universe{\setM} \setminus
P$. Thus, $\universe{\setM} \setminus P = \bigcup \{S_a \mid a \in
\universe{\setM} \setminus P \}$. If there is a maximum $a \in
\universe{\setM} \setminus P$ we conclude that $\universe{\setM}
\setminus P = S_a = \{\ldots, a-3,a-2,a-1,a\}$, while if there is no
maximum for $\universe{\setM} \setminus P$ then $\universe{\setM}
\setminus P = \ZetaModelMinus \cup \ZetaModelPlus$. In the first case
we have $\mu(\universe{\setM} \setminus P) = 1/3$, in the second one
we have $\mu(\universe{\setM} \setminus P) = 2/3$. Thus, if $P$ is a
counter-example to the induction schema for $\Npredicate$ then $\mu(P)
= 1/3, 2/3$ and $\mu(P)$ is not a dyadic rational.
\end{proof}

An example: if $P = \NatModel \cup {\Zeta}^+_0$, then $P$ is closed under $0$, ${\s}$ and $\zeroZ -1 \not \in P$. $P$ does \emph{not} satisfy the induction schema and $\mu(P) = 2/3$ is \emph{not} dyadic.

\section{The Set $\setR$ of Partial Bijections on $\universe{\setM}$}
\label{section:partial-bijections}

In this section we introduce some set $\setR$ of partial bijections on
$\universe{\setM}$, whose domains have some dyadic rational measure. In
sections \ref{section:quantifier-elimination}, \ref{section:main} we
will prove that all definable sets in $\setM$ (definition
\ref{definition-definable-predicates}) are domains of bijections in
$\setR$, therefore all these have dyadic rational measure, and by Lemma
\ref{lemma:measure} they satisfy the induction schema for $\Npredicate$
(definition \ref{definition-induction-schema}). 
We will conclude that $(\setM,\setH_\setM)$ is a Henkin model of
$\LKID(\Sigma_\Npredicate, \Phi_\Npredicate)$.

For a set $X$ and binary relations $R, S$ we write: $\id_X = \{(x,x) \mid  x \in X\}$, $\dom(R) = \{x \mid \exists y.(x,y) \in R\}$, $\range(R) = \{y \mid \exists x.(x,y) \in R\}$, $R^{-1} = \{(y,x)  \mid  (x,y) \in R\}$, $R \comp S = \{(x,z)  \mid  \exists y. ((y,z) \in R) \wedge ((x,y) \in S) \}$ and $R \lceil X = \{(x,y) \in R  \mid  x \in X\}$. 
Note that we write a relation composition $R \comp S$ \emph{in the same order as function composition}.

\subsection{The set $\setD$}
\label{subsection:D}

In this subsection we propose a candidate $\setD$ for the definable
subsets of $\setM$ (definition
\ref{definition-definable-predicates}). $\setD$ will consist of
$\universe{\setM}$, the set including every other elements of 
$\universe{\setM}$, the set including every other elements of 
the previous set, and so
forth. $\setD$ is closed under translations and finite unions and
adding/removing finitely many elements. We first define the
equivalence relation $\sim$, the subset $M(2^r, z)$ of
$\universe{\setM}$, and a set $\setB$ of subsets for
$\universe{\setM}$, then we will define $\setD$.

For sets $I$, $J$ we define $I\subsetsim J$ as ``$(I \setminus J)$ is finite'': this means ``$I \subseteq J$ up to finitely many elements''. We define $I \sim J$ as $I\subsetsim J \wedge J\subsetsim I$: this means ``$I, J$ are equal up to finitely many elements''.  $I \sim J$ is equivalent to: $(I \setminus J) \cup (J \setminus I)$ is finite.
For $r \in \Nat$, $s \in \Zeta$ we define the following set of elements of $\universe{\setM}$:
$$
M(2^r,s) \
= 
\{0_\setM + (2^r*z + s)  \mid  2^r*z + s \ge 0 \wedge  z \in \Zeta \} \
\cup \
\{\zeroZ + (2^r*z + s)    \mid  z \in \Zeta \}
$$  
If $r=0, s=0$ then $M(2^r,s) = \universe{\setM}$. We write
$\setB$ for the set of all sets $M(2^r,s)$, for some $r \in \Nat$, $s
\in \Zeta$. Since $2^r > 0$, all sets $M(2^r,s)$ are infinite.
We define $\setD$ as the set of subsets which equal
finite unions of sets in $\setB$ up to finitely many elements.

\begin{defi}[The set  $\setD$]
\label{definition:D}
$D \in \setD$ if and only if $D \subseteq \universe{\setM}$ and $D \sim (B_1 \cup \ldots \cup B_n)$ for some $B_1, \ldots, B_n \in \setB$. 
\end{defi}

We prove that
every set in $\setD$ has some dyadic rational measure.

\begin{lem}[$\setD$-Lemma]
\label{lemma:D}
Let $a_0, a \in \Nat$ and $D \in \setD$.
\begin{enumerate}
\item %1
All finite subsets of $\universe{\setM}$ are in $\setD$.
\item %2
For all $a \ge a_0$ there are $0 \le b_1 < \ldots < b_i < 2^a$ such that $M(2^{a_0},b) = (M(2^{a},b_1) \cup \ldots \cup M(2^{a},b_{i}))$.
\item %3
For any $D$ there are some $a$ and $0 \le b_1 < \ldots < b_i < 2^a$ such that $D \sim (M(2^a,b_1) \cup \ldots \cup M(2^a,b_i))$.
\item %4
$\mu(D)$ is some dyadic rational.
\item %4bi
$D$ satisfies the induction schema for $\Npredicate$.
\item %5
$\setD$ is closed under $\sim$, complement and finite union.
\end{enumerate}
\end{lem}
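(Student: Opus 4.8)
The plan is to prove the six claims in the stated order, treating the congruence–refinement identity~(2) as the combinatorial engine and deriving (3)--(6) from it. Claim~(1) is immediate: every finite $F \subseteq \universe{\setM}$ satisfies $F \sim \emptyset$, and $\emptyset$ is the (empty) finite union of members of $\setB$, so $F \in \setD$ by Definition~\ref{definition:D}.

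For claim~(2) I would first read off $M(2^r,s)$ as the set of points of $\universe{\setM}$ whose coordinate is $\equiv s \pmod{2^r}$, where the nonnegativity condition $2^r z + s \ge 0$ on the $\NatModel$-component is automatically satisfied (all $\NatModel$-points have nonnegative coordinate) and identical on both sides of the asserted equation. The residues modulo $2^a$ that reduce to $b$ modulo $2^{a_0}$ are exactly $b, b+2^{a_0}, \ldots, b+(2^{a-a_0}-1)2^{a_0}$ taken in $[0,2^a)$; listing them as $b_1 < \ldots < b_i$ with $i = 2^{a-a_0}$ and using that an integer is $\equiv b \pmod{2^{a_0}}$ iff it is $\equiv b_j \pmod{2^a}$ for exactly one $j$, one obtains the \emph{exact} equality $M(2^{a_0},b) = M(2^a,b_1) \cup \ldots \cup M(2^a,b_i)$. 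Claim~(3) then follows by taking $a$ to be the maximum of the moduli in a representation $D \sim M(2^{r_1},s_1) \cup \ldots \cup M(2^{r_n},s_n)$ (we may assume each $s_k \in [0,2^{r_k})$), rewriting each summand by~(2) as a union of sets $M(2^a,\cdot)$, and collecting the distinct residues $b_1 < \ldots < b_i$ in $[0,2^a)$; here I use that $\sim$ is preserved by finite unions.

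For claim~(4) the key computation is $\mu(M(2^a,b)) = 1/2^a$: along each of the three rays $0_\setM + n$, $\zeroZ - n$, $\zeroZ + n$ appearing in the definition of $\mu$, the density of indices $n \in [0,x]$ whose point lies in $M(2^a,b)$ is $1/2^a$ (for the ray $\zeroZ - n$ the condition is $n \equiv -b \pmod{2^a}$, but the density is the same), so each count contributes $\approx (x+1)/2^a$ and, up to a bounded error, the limit exists and equals $1/2^a$. Since the sets $M(2^a,b_1), \ldots, M(2^a,b_i)$ of claim~(3) carry pairwise distinct residues modulo $2^a$ they are disjoint, and since $\mu$ is invariant under $\sim$, additivity gives $\mu(D) = i/2^a$, a dyadic rational. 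Claim~(5) is then an immediate consequence of~(4) and the Measure Lemma (Lemma~\ref{lemma:measure}).

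Finally, for claim~(6): closure under $\sim$ is transitivity of $\sim$, and closure under finite union follows from Definition~\ref{definition:D} together with the fact that $\sim$ respects unions. The one substantive point is closure under complement, where I expect the main (though still modest) work to lie. The idea is that the sets $M(2^a,c)$ for $c \in [0,2^a)$ form a partition of $\universe{\setM}$: every point has a unique residue modulo $2^a$, and the nonnegativity constraint never drops a $\NatModel$-point from all of them. Hence, writing $D \sim M(2^a,b_1) \cup \ldots \cup M(2^a,b_i)$ by claim~(3) and using that complementation is compatible with $\sim$, one gets $\universe{\setM} \setminus D \sim \bigcup \{ M(2^a,c) \mid c \in [0,2^a) \setminus \{b_1,\ldots,b_i\} \}$, a finite union of members of $\setB$, so $\universe{\setM} \setminus D \in \setD$. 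Throughout, the only real obstacle is careful bookkeeping of residues and of the asymmetric sign constraint on the $\NatModel$-component; once claim~(2) and the partition property are set up cleanly, the remaining claims are short.
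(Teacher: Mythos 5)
Your proposal is correct and follows essentially the same route as the paper's own proof: finite sets via $F \sim \emptyset$, refinement of the sets $M(2^{a_0},b)$ into residue classes modulo $2^a$, a common modulus for claim (3), the computation $\mu(M(2^a,b)) = 1/2^a$ plus disjointness and finite additivity for (4), the Measure Lemma for (5), and complementary residues for (6). The only cosmetic difference is in claim (2), where you list the refining residues directly while the paper iterates the doubling identity $M(2^a,b) = M(2^{a+1},b) \cup M(2^{a+1},b+2^a)$; these are the same argument.
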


\begin{proof}[Proof]%[Lemma \ref{lemma:D} ($\setD$-Lemma)]
\begin{enumerate}
\item %1
$\emptyset$ is a finite union, therefore $\setD$ includes all $D \sim \emptyset$: that is, $\setD$ includes all finite subsets of $\universe{\setM}$.
\item %2
By repeatedly applying the equation $M(2^a,b) = M(2^a,b+2^a)$ we can assume that $0 \le b < 2^a$. Then we repeatedly apply the equation $M(2^a,b) = M(2^{a+1},b) \cup M(2^{a+1},b+2^a)$.
\item %3
Assume $D \sim (B_1 \cup \ldots \cup B_n)$. By the point $2$ above
there are $a_1, \ldots, a_n$ such that for all $a \ge a_1, \ldots,
a_n$ and all $i=1, \ldots, n$ there are $0 \le b_{i,1} < \ldots <
b_{i,n_i} < 2^a$ such that $B_i = (M(2^{a},b_{i,1}) \cup \ldots \cup
M(2^{a},b_{i,n_i}))$. It follows our claim with $a = \max(a_1,
\ldots, a_n)$.

\item %4
From the point $3$ above there are $a$ and $0 \le b_1 < \ldots < b_i < 2^{a}$ in $\Nat$ such that $D \sim (M(2^{a},b_1) \cup \ldots \cup M(2^{a},b_i))$. For any $M(2^{a},b) \subseteq \universe{\setM}$ we have $\mu(M(2^{a},b) \cap \universe{\setM}) = 1/2^a$. Since $0 \le b_1 < \ldots < b_i < 2^{a}$, 
the sets $M(2^{a},b_1), \ldots, M(2^{a},b_i)$ are pairwise disjoint. From $\mu(D)$ finite additive, we deduce that $\mu(D) = i/2^a$.

\item %4bis
By the point $4$ above and Lemma \ref{lemma:measure}, 
$D$ satisfies the induction schema for $\Npredicate$.

\item %5

By construction, $\setD$ is closed under $\sim$ and finite
union. Thus, we have to prove that if $D \in \setD$ then
$(\universe{\setM} \setminus D) \in \setD$. By the point $3$ above
there are $a$ and $0 \le b_1 < \ldots < b_i < 2^{a}$ in $\Nat$ such
that $D \sim (M(2^{a},b_1) \cup \ldots \cup M(2^{a},b_i))$. Assume
that $[0,2^a)\setminus \{b_1, \ldots, b_i\}= \{c_1, \ldots, c_j\}$:
then $(\universe{\setM} \setminus D) \sim (\universe{\setM} \setminus
M(2^a,b_1) \cup \ldots \cup M(2^a,b_i)) = (M(2^a,c_1) \cup \ldots \cup
M(2^a,c_j)) \in \setD$. By definition, we
conclude that $(\universe{\setM} \setminus D) \in \setD$.\qedhere
\end{enumerate}
\end{proof}

\subsection{The Set $\setR$ of Partial Bijections on $\universe{\setM}$}
\label{subsection:R}
In this subsection we define a set $\setR$ of partial bijections on $\universe{\setM}$ whose domains are in $\setD$.

From now on, for $n \in \Nat$ we call the relation $R^n$ the $n$-th
power of the relation $R$. $R^n$ is defined by iterating composition
$n$ times: $R^0$ is the identity relation and $R^{n+1}$ is $R^n \comp
R$. We define a negative power of a relation by $R^{-n} =
(R^{-1})^{n}$, where $R^{-1}$ denotes the inverse of $R$. Let $R_0 =
\universe{\setM}^2 \setminus p_\setM$.  Graphically, $R_0$ is the
union of the three lines we see in the image in section
\ref{subsection:counter-model}. We will define $\setR$ as the set of
binary relations on $\universe{\setM}$ which include $R_0^z$ for some $z
\in \Zeta$, 
constant addition relation,
plus all relations we obtain from those by restricting the
domain to some $D \in \setD$.

We first define some set $\setF$ of straight lines. $\setF$ is the set of maps $\phi: \Rational \rightarrow \Rational$, defined by $\phi(x) =  2^z x + r$ for some $z \in \Zeta$ and some $r \in \Rational$.
$\setF$ is closed under inverse: if $\phi(x) = 2^{z} x + r$, then $\phi^{-1}(x) = 2^{-z} x -r/2^{z}$. $\setF$ is closed under composition: if $\phi_i(x) = 2^{z_i} x + r_i$ for $i=1,2$, then $\phi_2(\phi_1(x)) = 2^{z_1 + z_2} x + (2^{z_2} r_1 + r_2)$.

Let $\Rational + \Rational = \{ (i,r)  \mid  i=1,2 \wedge r \in \Rational\}$. We extend the notations $0_\setM + n$ and $\zeroZ + z$ for $n\in\Nat, z\in\Zeta$ to the notations $0_\setM + r = (1,r) \in \Rational + \Rational $ and  $\zeroZ + r = (2,r) \in \Rational + \Rational $ for $r \in \Rational$. 

Let $\phi \in \setF$, $\phi(x) = 2^z x + r$ with $z \in \Zeta$ and $r
\in \Rational$. We say that $\phi_\calM$ is \emph{even} if $z$ even,
and that $\phi$ is \emph{odd} if $z$ is odd.  For any $\phi \in \setF$
we define a map $\phi_\setM:\dom(\phi) \rightarrow \Rational +
\Rational$. We set $\phi_\setM((i,r)) = (i,\phi(r))$ if $\phi$ is
even, and if $\phi$ is odd: $\phi_\setM((1,r)) = (2,\phi(r))$,
$\phi_\setM((2,r)) = (2,\phi(r))$ if $r<0$, $\phi_\setM((2,r)) =
(1,\phi(r))$ if $r \ge 0$. We say that $\phi$ is sign-preserving at
$0_\setM + a \in \universe{\setM}$ if $a \ge 0$ implies
$\phi(a) \ge 0$, it is sign-preserving at $ \zeroZ + b \in
\universe{\setM}$ if $b \ge 0 \Leftrightarrow \phi(b) \ge 0$. $\phi$
is sign-preserving on $E \subseteq \universe{\setM}$ if $\phi$ is
sign-preserving on all $e \in E$.

Assume $\phi(x)$ is sign-preserving on $E \subseteq \universe{\setM}$. We deduce: if $\phi$ is even, then $\phi_\setM(\NatModel \cap E) \subseteq \NatModel$ and $\phi_\setM(\ZetaModelMinus \cap E) \subseteq \ZetaModelMinus$ and $\phi_\setM(\ZetaModelPlus \cap E) \subseteq \ZetaModelPlus$; if $\phi$ is odd, then $\phi_\setM(\NatModel  \cap E) \subseteq \ZetaModelPlus$ and $\phi_\setM(\ZetaModelMinus  \cap E) \subseteq \ZetaModelMinus$ and $\phi_\setM(\ZetaModelPlus  \cap E) \subseteq \NatModel$. As a consequence, if $\phi$ is sign-preserving on $E \subseteq \universe{\setM}$, and $\phi(E) \subseteq F$, and $\psi$ is sign-preserving on $F$, then $(\psi \comp \phi)_\setM$ and $\psi_\setM \comp \phi_\setM$ coincide when restricted to $E$.

A {\em partial bijection} on $\universe{\setM}$ is a one-to-one relation between two subsets of $\universe{\setM}$. A {\em partial identity} on $\universe{\setM}$ is the identity relation on some subset of $\universe{\setM}$. We now define a set $\setR$ of partial bijections on $\universe{\setM}$ which are the restriction of $\phi_\setM$ to some $D \in \setD$, for some $\phi \in \setF$, and have range some $E \in \setD$. 
For instance, one bijection in $\setR$ is defined by $\phi(x) = 2^2 x$, with domain $\universe{\setM}$ and codomain $M(2^2 ,0)$, mapping $0_\setM +n \mapsto 0_\setM +4n$ and $\zeroZ +z \mapsto \zeroZ + 4z$.

We define ``even'' and ``odd'' bijections. They will be restrictions of
an even or odd power of the relation $R_0 = \universe{\setM}^2 \setminus p_\setM$.

We write $M$ for $|\calM|$.

\begin{defi}[The set of partial bijections $\setR$]
\label{definition:bijections}
Let $D, E \in \setD$ and $\phi \in \setF$, $\phi(x) = 2^z*x + r$. 
\begin{enumerate}

\item
$R$ is a $(D,E,\phi)$-bijection if $R$ is in $\calR$.
and $\dom(R)=D, \range(R)=E$ and
$\forall a \in D.\forall b \in \universe{\setM}.R(a,b) \Leftrightarrow
b=\phi_\calM(a)$ and $\phi$ is sign-preserving on $E$.

\item
$R$ is an even (odd) bijection if $R$ is a $(D,E,\phi)$ bijection
and $\phi$ is even (odd).
\item
$\setR$ is the set of all $(D,E,\phi)$-bijections for $D, E \in \setD$ and $\phi \in \setF$.
\end{enumerate}
\end{defi}

$R_0$, the complement of $p_\setM$, is an example of an odd
bijection, shown as follows. $R_0$ is a partial bijection.
For all $n \in \Nat$, $R_0$ maps $0_\setM + n
\mapsto \zeroZ + 2n$ and $\zeroZ - (n+1) \mapsto \zeroZ - 2(n+1)$ and
$\zeroZ + n \mapsto 0_\setM + 2n$. $R_0$ is 
the restriction of $\phi_\setM$ to $\universe{\setM}$, 
where $\phi$ is the odd map
$\phi(x) = 2^1x$.

We will prove that the definable sets of $\setM$ (definition
\ref{definition-definable-predicates}) can be expressed by the
propositional formulas whose predicates are equality and symbols for
predicates in $\setR$.

\begin{lem}[$\phi$-Lemma]

Let $\phi(x) = 2^{z_1}x + r_1$, $z_1 \in \Zeta$, $r_1 \in
\Rational$. 

\label{lemma:phi}
\begin{enumerate}
\item %0
Assume $r \in \Nat$, $z \in
\Zeta$ and $B = M(2^{r},z) \in \setB$ and $B' = M(2^{r+z_1},2^{z_1}z +
r_1)$.
If $B' \in \setB$ then $\phi_\calM(B) \sim B'$.
\item %1
If $B \in \setB$ and $\phi_\setM(B) \subsetsim \universe{\setM}$, then $\phi_\setM(B) \sim B'$ for some $B' \in \setB$
\item %2
If $D \in \setD$ and $\phi_\setM(D) \subseteq \universe{\setM}$ then $\phi_\setM(D) \in \setD$
\end{enumerate}
\end{lem}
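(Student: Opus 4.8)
The plan is to handle the three parts in order, since part 1 is the computational heart and parts 2 and 3 reduce to it by set-theoretic bookkeeping together with the closure properties of $\setD$ already established in the $\setD$-Lemma. I expect the odd case of part 1 to be the main obstacle: there $\phi_\setM$ does not act coordinatewise but redistributes points among $\NatModel$, $\ZetaModelMinus$ and $\ZetaModelPlus$ according to the sign of the argument and of its image, and the bookkeeping must confirm that all the sign-boundary corrections are finite, so that only ``$\sim$'', not genuine equality, is claimed.

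For part 1, I would compute $\phi_\setM(B)$ directly, where $B = M(2^r,z)$ and $\phi(x) = 2^{z_1}x + r_1$. The underlying affine map sends the progression $\{2^r w + z \mid w \in \Zeta\}$ onto $\{2^{r+z_1} w + s' \mid w \in \Zeta\}$ with $s' = 2^{z_1}z + r_1$, which is exactly the progression defining $B' = M(2^{r+z_1}, s')$; the hypothesis $B' \in \setB$ guarantees $r + z_1 \in \Nat$ and $s' \in \Zeta$. I would then split on the parity of $z_1$. If $\phi$ is even, $\phi_\setM$ preserves the component index, so the $\Zeta$-part of $B$ maps onto the $\Zeta$-part of $B'$, while the $\Nat$-parts agree up to the finitely many elements where the cutoffs $2^r w + z \ge 0$ and $2^{r+z_1}w + s' \ge 0$ disagree. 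If $\phi$ is odd, $\phi_\setM$ swaps components according to sign; here I would use that $\phi$ is strictly increasing (its slope $2^{z_1}$ is positive), so that only finitely many elements lie on the ``wrong'' side of each sign boundary, whence the images reassemble into the $\Nat$- and $\Zeta$-parts of $B'$ up to a finite discrepancy. In both cases this gives $\phi_\setM(B) \sim B'$.

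For part 2, the task is to manufacture the target $B' \in \setB$ from the hypothesis $\phi_\setM(B) \subsetsim \universe{\setM}$ and then quote part 1. I would examine the $\Zeta$-component of $B = M(2^r,z)$, which is infinite and two-sided: its image values are $2^{r+z_1}w + s'$ for all $w \in \Zeta$. If $r + z_1 < 0$ the step $2^{r+z_1}$ is a proper dyadic fraction, so infinitely many of these values are non-integers and the corresponding images leave $\universe{\setM}$, contradicting $\phi_\setM(B) \subsetsim \universe{\setM}$; hence $r + z_1 \ge 0$. Given $r + z_1 \ge 0$, the same family forces $s' \in \Zeta$, for otherwise every value $2^{r+z_1}w + s'$ is a non-integer and again infinitely many images escape $\universe{\setM}$. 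With $r + z_1 \in \Nat$ and $s' \in \Zeta$ in hand, $B' = M(2^{r+z_1}, s')$ lies in $\setB$, and part 1 yields $\phi_\setM(B) \sim B'$.

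For part 3, I would decompose $D \sim B_1 \cup \ldots \cup B_n$ with $B_i \in \setB$. Since $B_i \subsetsim D$ and $\phi_\setM(D) \subseteq \universe{\setM}$, and since $\phi_\setM(B_i \setminus D)$ is a finite set, each $\phi_\setM(B_i)$ satisfies $\phi_\setM(B_i) \subsetsim \universe{\setM}$, so part 2 applies and gives $\phi_\setM(B_i) \sim B_i'$ for some $B_i' \in \setB$. Using that $\phi_\setM$ is injective (so it preserves $\sim$) and that image commutes with finite unions, I would conclude $\phi_\setM(D) \sim \phi_\setM(B_1 \cup \ldots \cup B_n) = \bigcup_i \phi_\setM(B_i) \sim \bigcup_i B_i'$, a finite union of $\setB$-sets and hence in $\setD$; closure of $\setD$ under $\sim$ (point 6 of the $\setD$-Lemma), together with $\phi_\setM(D) \subseteq \universe{\setM}$, then gives $\phi_\setM(D) \in \setD$. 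A secondary point worth checking carefully is precisely the injectivity of $\phi_\setM$ in the odd case, namely that no $\Nat$-image collides with a $\Zeta$-image, which underwrites the preservation of $\sim$ used here.
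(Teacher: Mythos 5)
Your proposal is correct and follows essentially the same route as the paper's proof: part 1 by splitting on the parity of $z_1$ and matching the $\NatModel$, $\ZetaModelMinus$, $\ZetaModelPlus$ components of $B$ and $B'$ up to finitely many sign-boundary discrepancies; part 2 by forcing $2^{r+z_1}\in\Zeta$ and $2^{z_1}z+r_1\in\Zeta$ from the integrality of infinitely many image values and then invoking part 1; and part 3 by decomposing $D$ via the $\setD$-Lemma and applying part 2 to each $\setB$-component. Your extra checks (finiteness of the boundary corrections via monotonicity of $\phi$, and injectivity of $\phi_\setM$ in the odd case) are sound and only make explicit what the paper leaves implicit.
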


\begin{proof}[Proof]\leavevmode
\begin{enumerate}
\item %0
From $B, B' \in \setB$ we deduce $B =(B \cap \NatModel) \cup (B \cap \ZetaModelMinus) \cup (B \cap \ZetaModelPlus)$, and the same for $B'$.  If $\phi$ is even we have $\phi_\setM(B \cap \NatModel) \sim B' \cap \NatModel$ and $\phi_\setM(B \cap \ZetaModelMinus) \sim B' \cap \ZetaModelMinus$ and $\phi_\setM(B \cap \ZetaModelPlus) \sim B' \cap \ZetaModelPlus$. If $\phi$ is odd we have $\phi_\setM(B \cap \NatModel) \sim B' \cap  \ZetaModelPlus$ and $\phi_\setM(B \cap \ZetaModelMinus) \sim B' \cap \ZetaModelMinus$ and $\phi_\setM(B \cap \ZetaModelPlus) \sim B' \cap \NatModel$. In both cases we conclude that $\phi_\setM(B) \sim B'$.

\item %1
Assume $B = M(2^r,z)$ for some $r \in \Nat$, $z \in \Zeta$ and $\phi_\setM(a) \in \universe{\setM}$ for all but finitely many $a \in B$. By definition, the set $\phi_\setM(B)$ includes elements of $\universe{\setM}$ with second coordinate $r_w = (2^{r+z_1}*w + (2^{z_1}z + r_1)) \in \Zeta$, for all but finitely many $w \in \Zeta$. If we choose two consecutive values of $w$ such that $r_w \in \Zeta$, we deduce that $2^{r+z_1} \in \Zeta$ and $ 2^{z_1}z + r_1 \in \Zeta$, hence that $r+z_1 \in \Nat$. Thus, $M(2^{r+z_1},2^{z_1}z + r_1)$ is a set of $\setB$. By the point $1$ above we have $\phi_\setM(B) \sim M(2^{r+z_1},2^{z_1}z + r_1)$. We conclude that $\phi_\setM(B) \sim B' $ for some $B' \in \setB$.

\item %2
If $D \in \setD$ then by Lemma \ref{lemma:D}, point $3$, we have $D \sim
M(2^{a_1},b_1) \cup \ldots \cup M(2^{a_i},b_i)$ for some $a_1, \ldots, a_i \in
\Nat$ and some $b_1, \ldots, b_i \in \Zeta$. From $\phi_\setM(D) \subseteq
\universe{\setM}$ we deduce $\phi_\setM(a) \in \universe{\setM}$ for all but
finitely many $a \in M(2^{a_1},b_1) \cup \ldots \cup M(2^{a_i},b_i)$. By the
point $2$ above we obtain $\phi_\setM(M(2^{a_1},b_1)) \sim B'_1, \ldots,
\phi_\setM(M(2^{a_i},b_i)) \sim B'_n$ for some $B'_1, \ldots, B'_n \in \setB$.
Thus, $\phi_\setM(D) \in \setD$. \qedhere
\end{enumerate}
\end{proof}

$\setR$ and $\setD$ satisfy the following closure properties.

\begin{lem}[Partial bijections]
\label{lemma:partial-bijections}
Assume that $R, S \in \setR$ and $D \in \setD$.
\begin{enumerate}

\item %1
$\id_D \in \setR$

\item %2
If $D \in \setD$ then $R(D) \in \setD$

\item %3
$R \comp S \in \setR$.

\item %4
$R^{-1} \in \setR$

\end{enumerate}
\end{lem}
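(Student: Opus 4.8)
The plan is to prove the four properties by \emph{reducing each relation in $\setR$ to its underlying affine map in $\setF$} and then using the closure of $\setF$ under composition and inverse together with the image computations of Lemma \ref{lemma:phi}. Throughout I use that $\setD$ is closed under finite union and complement (Lemma \ref{lemma:D}, point 6), hence under finite intersection, and that every $R \in \setR$ is the restriction of $\phi_\setM$ to $\dom(R) \in \setD$ for some $\phi \in \setF$, with $\range(R) \in \setD$. For part 1, take $\phi(x) = x = 2^0 x + 0 \in \setF$: it is even, so $\phi_\setM = \Id$, which is sign-preserving on every subset of $\universe{\setM}$; hence $\id_D$ is the $(D,D,\phi)$-bijection and $\id_D \in \setR$. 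For part 2, writing $R$ as the restriction of $\phi_\setM$ to $D_R = \dom(R)$, we have $R(D) = \phi_\setM(D \cap D_R)$ with $D \cap D_R \in \setD$ and the image contained in $\range(R) \subseteq \universe{\setM}$; Lemma \ref{lemma:phi}, point 3, then gives $R(D) \in \setD$.

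For part 4, let $R$ be the $(D,E,\phi)$-bijection and set $\psi = \phi^{-1} \in \setF$, using that $\setF$ is closed under inverse. Then $R^{-1}$ is the restriction of $\psi_\setM$ to $E$: the point to check is that $(\phi^{-1})_\setM$ really inverts $\phi_\setM$ on the relevant points, which follows from the identity $(\psi \comp \phi)_\setM = \psi_\setM \comp \phi_\setM$ recorded before Definition \ref{definition:bijections}, applied with $\psi \comp \phi = \Id$ on a set where the signs are preserved. Since $E, D \in \setD$, $R^{-1}$ is then the $(E,D,\phi^{-1})$-bijection and lies in $\setR$.

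For part 3, let $R$ be the $(D_R,E_R,\phi)$-bijection and $S$ the $(D_S,E_S,\psi)$-bijection, and recall that $R \comp S$ applies $S$ first; its underlying map is $\chi = \phi \comp \psi \in \setF$. The domain of $R \comp S$ is $\{x \in D_S \mid S(x) \in D_R\} = S^{-1}(D_R \cap E_S)$, which lies in $\setD$ by part 4 (giving $S^{-1} \in \setR$) followed by part 2. On this domain $R \comp S$ agrees with $\chi_\setM$ by the sign-preservation remark before Definition \ref{definition:bijections}, so its range is the $\chi_\setM$-image of its domain, again in $\setD$ by Lemma \ref{lemma:phi}, point 3; hence $R \comp S$ is a $(\dom, \range, \chi)$-bijection in $\setR$. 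I would order the proof as 1, 2, 4, 3, so that part 3 may quote parts 2 and 4.

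The main obstacle is \emph{not} the set-theoretic computation, which is handled uniformly by Lemma \ref{lemma:phi}, point 3, and the closure of $\setD$, but the verification that the piecewise, sign-dependent definition of $\phi_\setM$ is compatible with composition and inversion of affine maps. One must ensure that $(\phi \comp \psi)_\setM$ genuinely agrees with $\phi_\setM \comp \psi_\setM$ and that $(\phi^{-1})_\setM$ genuinely inverts $\phi_\setM$, including the crossover between the components $\NatModel$, $\ZetaModelMinus$, $\ZetaModelPlus$ induced by odd maps. This is exactly where the sign-preservation hypotheses of Definition \ref{definition:bijections} enter, and making the odd/even case analysis line up on the correct domain- and range-sets is the delicate part of the argument.
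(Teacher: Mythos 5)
Your proposal is correct and follows essentially the same route as the paper: each item is reduced to the underlying affine map in $\setF$, the image and preimage computations are delegated to Lemma \ref{lemma:phi} together with the closure properties of $\setD$, and the even/odd sign bookkeeping is handled via the remark that $(\psi\comp\phi)_\setM$ and $\psi_\setM\comp\phi_\setM$ agree under the sign-preservation hypotheses. The only (harmless) differences are organizational: you prove part 4 before part 3 so that the domain $S^{-1}(D_R\cap E_S)$ of the composite can be obtained from parts 4 and 2, where the paper instead invokes Lemma \ref{lemma:phi} directly on $\phi^{-1}$.
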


\begin{proof}\leavevmode
\begin{enumerate}
\item
$\id_D$ is an even $(D,D,\id)$-bijection.

\item

If $R$ is a $(A,B,\phi)$-bijection, then by $\phi_\setM(A \cap D) = R(D)
\subseteq B \subseteq \universe{\setM}$ we deduce $\phi_\setM(A \cap D)
\subseteq \universe{\setM}$. By Lemma \ref{lemma:phi},
by $A \cap D \in \setD$ and $\phi_\setM(A \cap D) \subseteq \universe{\setM}$ 
we deduce $R(D) \in \setD$.

\item %3

$S$ is some $(A,B,\phi)$-bijection and $R$ is some
$(C,D,\psi)$-bijection.  If $R$ is an
even bijection, then $R$ maps $\NatModel$ in $\NatModel$, and
$\ZetaModelPlus$ in $\ZetaModelPlus$, and $\ZetaModelMinus$ in
$\ZetaModelMinus$.  If $R$ is an odd bijection, then $R$ maps
$\NatModel$ in $\ZetaModelPlus$, and $\ZetaModelPlus$ in $\NatModel$,
and $\ZetaModelMinus$ in $\ZetaModelMinus$.  The same holds for $\psi$
and $S$. Thus, $R \comp S \in \setR$ is even if both are even or both
are odd, and it is odd if one is odd and the other is even, and it is some
$(\phi^{-1}_\setM(B\cap C), \psi_\setM(B \cap C), \psi \comp
\phi)$-bijection. Here we use the fact that $B, C \in \setD$ imply $B
\cap C \in \setD$ by closure of $\setD$ under intersection, and that
$\phi^{-1} \in \setF$ and $\phi^{-1}_\setM(B \cap C) \subseteq A
\subseteq \universe{\setM}$ imply $\phi^{-1}_\setM(B\cap C) \in \setD$
by Lemma \ref{lemma:phi}, point $2$. In the same way we prove that
$\psi_\setM(B\cap C) \in \setD$.

\item %4

Assume that $R$ is a $(D,E,\phi)$-bijection. Then $R^{-1}$ is even
or odd according to what is $R$, and is a $(E,D,\phi^{-1})$-bijection.\qedhere

\end{enumerate}
\end{proof}

We write $L(\setR)$ for the first-order language generated from 
binary predicate symbols for
relations in $\setR$. Our goal is to prove that every first-order
definable set in $\setM$ is in $\setD$. Since the sets definable in
$L(\setR)$ include those definable in $\setM$, it is enough to prove
that any first-order definable set in $L(\setR)$ is in $\setD$. To
this aim, we need a quantifier-elimination result for a language
including $L(\setR)$.

\section{Quantifier Elimination Result for Partial Bijections}
\label{section:quantifier-elimination}

In this section we prove a quantifier elimination result for some set of
partial bijections, which is an abstract counterpart of the set
$\setR$ introduced in section \ref{section:partial-bijections}. The
quantifier elimination result holds when $\setR$ is closed under
composition and inverse. It is a simple, self-contained result
introducing a model-theoretical tool of some interest.  The only part
of this section which is used in the rest of the paper is the theorem
\ref{th:quantifier-constant-elim}, which will be used to characterize
the sets which
are first order definable from the set $\setR$. We take the definition
of quantifier elimination from \cite{Enderton00}, section 3.1, section
3.2.

\begin{defi}
A set $\calR$ of bijections on $U$ is called 
{\em closed under composition and inverse},
if $R,S \in \calR$ implies  $R^{-1}, R \circ S \in \calR$.
\end{defi}

\begin{lem}
Assume $R_1,R_2$ are in $\calR$ that is closed under composition and inverse.

\begin{enumerate}[label=(\arabic*)]
\item $\exists x_2R_1(x_1,x_2) \lequiv R_1^{-1} \circ R_1(x_1,x_1)$.
\item $R_1(x_1,x_2) \land R_2(x_2,x_3) \lequiv R_1(x_1,x_2) \land R_2 \circ R_1(x_1,x_3)$.
\end{enumerate}
\end{lem}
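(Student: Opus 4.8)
The plan is to unfold the definitions of relational composition $R \comp S = \{(x,z) \mid \exists y.\, (y,z) \in R \wedge (x,y) \in S\}$ and inverse $R^{-1} = \{(y,x) \mid (x,y) \in R\}$ from Section~\ref{section:partial-bijections}, and then to verify each equivalence semantically, for an arbitrary assignment of elements of $\universe{\setM}$ to the free variables. The only role of the closure hypothesis is to guarantee that $R_1^{-1} \comp R_1$ and $R_2 \comp R_1$ again belong to $\calR$, so that the right-hand sides are genuine atomic formulas of $L(\setR)$; the equivalences themselves are then purely relational facts. Throughout I would keep in mind that $\comp$ is taken in function-composition order, so that in $R \comp S$ the relation $S$ acts first and $R$ second; getting this bookkeeping right is the one thing that needs care.

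For part $(1)$ I would compute $R_1^{-1} \comp R_1 = \{(x,z) \mid \exists y.\, R_1(x,y) \wedge R_1(z,y)\}$, using $(y,z) \in R_1^{-1} \Leftrightarrow R_1(z,y)$. Instantiating at the diagonal point $(x_1,x_1)$ collapses the two conjuncts into one and leaves exactly $\exists y.\, R_1(x_1,y)$, which after renaming $y$ to $x_2$ is $\exists x_2.\, R_1(x_1,x_2)$. This direction needs neither injectivity nor single-valuedness: it is a general identity for any binary relation, and the bijection assumption plays no role at the diagonal.

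Part $(2)$ is where the bijection hypothesis actually does the work. Unfolding gives $R_2 \comp R_1(x_1,x_3) \Leftrightarrow \exists y.\, R_1(x_1,y) \wedge R_2(y,x_3)$. The left-to-right implication is immediate: from $R_1(x_1,x_2) \wedge R_2(x_2,x_3)$ I take $y := x_2$ as a witness, so both conjuncts on the right hold, the first being literally $R_1(x_1,x_2)$. For the converse I would assume $R_1(x_1,x_2)$ together with a witness $y_0$ satisfying $R_1(x_1,y_0) \wedge R_2(y_0,x_3)$; here I invoke that $R_1$, being a partial bijection, is single-valued, so that $R_1(x_1,x_2)$ and $R_1(x_1,y_0)$ force $x_2 = y_0$. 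Then $R_2(y_0,x_3)$ becomes $R_2(x_2,x_3)$ and the original left-hand side holds.

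The only genuine obstacle is thus the backward direction of $(2)$, and even there the content reduces to the single-valuedness of $R_1$ as a relation; everything else is definitional unfolding together with the observation that the two conjuncts coincide on the diagonal. I therefore expect the whole argument to be short, the main risk being an order or direction slip in the composition convention rather than any conceptual difficulty.
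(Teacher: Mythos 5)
Your proof is correct and follows essentially the same route as the paper: unfold the definitions of $\circ$ and ${}^{-1}$ and use the single-valuedness of the partial bijection $R_1$ to identify the composition witness with $x_2$ in the backward direction of (2) (the paper reaches the same point slightly more indirectly, by first passing to $R_2 \circ R_1 \circ R_1^{-1}(x_2,x_3)$). Your observation that (1) is a purely relational identity at the diagonal, needing no bijectivity in either direction, is also correct and in fact a touch sharper than the paper's remark attributing the forward implication to the partial-bijection hypothesis.
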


\begin{proof}\leavevmode
  \begin{enumerate}[label=(\arabic*)]
  \item $\leftarrow$ trivially holds. Since $R_1$ is a partial bijection, $\imp$ holds.
  \item $\imp$ trivially holds. We will show $\leftarrow$.
    Assume $R_1(x_1,x_2) \land R_2 \circ R_1(x_1,x_3)$.
    Then $R_1^{-1}(x_2,x_1)$.
    Then $R_2 \circ R_1 \circ R_1^{-1}(x_2,x_3)$.
    By definition of composition,
    $R_2 \circ R_1 \circ R_1^{-1}(x_2,x_3)~\imp~ 
    \exists x_1x_2'(R_1^{-1}(x_2,x_1) \land R_1(x_1,x_2') \land R_2(x_2',x_3))$.
    Since $R_1$ is a partial bijection, $x_2=x_2'$.
    Hence
    $R_2 \circ R_1 \circ R_1^{-1}(x_2,x_3) \imp R_2(x_2,x_3)$.
    Hence we have the left-hand side. \qedhere
  \end{enumerate}
\end{proof}

For a set $\calR$ of partial bijection on $U$ 
we will consider the structure $(U,\calR)$ where
$U$ is the universe and each partial bijection in $\calR$ is a binary relation.
We will also consider the theory of the structure $(U,\calR)$ where
each element $u$ in $U$ is denoted by the constant $u$ itself
and each partial bijection $R$ in $\calR$ is denoted by the predicate symbol $R$ itself.

\begin{thm}[Quantifier Elimination]\label{th:quantifier-elim}
If $\calR$ is a set of partial bijection on $U$ that is closed under composition and inverse,
the theory of the structure $(U,\calR)$ admits quantifier elimination.
\end{thm}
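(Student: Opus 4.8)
The plan is to apply the standard criterion for quantifier elimination (as in \cite{Enderton00}): it is enough to prove that, for every variable $x$ and every conjunction of literals $\psi$, the formula $\exists x.\,\psi$ is equivalent, in the theory of $(U,\calR)$, to a quantifier-free formula. An arbitrary formula is then handled by putting it in prenex form and eliminating innermost quantifiers one at a time, writing $\forall x$ as $\neg\exists x\,\neg$ and putting each matrix in disjunctive normal form so that $\exists x$ distributes over $\vee$; this leaves exactly the case of $\exists x$ over a conjunction of literals. In such a conjunction I first discard every literal not containing $x$, since it commutes with $\exists x$, and if a positive equation $x=t$ with $t\neq x$ is present I substitute $t$ for $x$ and drop the quantifier. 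Every surviving literal that mentions $x$ is then either relational or an inequality $x\neq t$; using that $\calR$ is closed under inverse and that $R(s,x)\Lequiv R^{-1}(x,s)$, I normalise every positive relational literal so that $x$ occurs as its first argument.

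The principal case is when some positive relational literal $R(x,t)$ with $t\neq x$ survives; I call it the \emph{anchor}. Since $R$ is a partial bijection, the anchor forces $x$ to be the unique $R$-preimage of $t$, and this lets me rewrite every other literal mentioning $x$ as a literal free of $x$ while keeping the anchor. Under the assumption of the anchor, a further positive literal $S(x,s)$ is equivalent to $(S\circ R^{-1})(t,s)$, a fixed-point literal $R'(x,x)$ is equivalent to $(R\circ R'\circ R^{-1})(t,t)$, a negated relational literal is equivalent to the negation of the corresponding composite atom, and an inequality $x\neq u$ is equivalent to $\neg R(u,t)$; each of these uses only functionality of partial bijections, the chain-collapse being exactly point (2) of the preceding Lemma, and all the composites lie in $\calR$ by closure under composition and inverse. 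What is left is $\exists x.\,R(x,t)$ conjoined with the extracted $x$-free literals; rewriting $R(x,t)$ as $R^{-1}(t,x)$ and applying point (1) of the preceding Lemma replaces $\exists x.\,R^{-1}(t,x)$ by the quantifier-free atom $(R\circ R^{-1})(t,t)$, so this case is finished.

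The residual case is when no positive relational literal relates $x$ to a term distinct from $x$, so the only surviving positive relational literals are fixed-point atoms. The formula then has the form $\exists x.\,F(x)\wedge\bigwedge_r C_r(x)$, where $F(x)$ is a conjunction of atoms $R(x,x)$ and negated atoms $\neg S(x,x)$ containing no parameter besides $x$, and each $C_r$ is either an inequality $x\neq u$ or a negated relational literal $\neg T(x,u)$ with $u$ a parameter. By functionality each $C_r$ forbids at most one value of $x$, so the formula holds exactly when the parameter-free set $\{x\mid F(x)\}$ is not covered by these finitely many forbidden values. Since $\{x\mid F(x)\}$ is a fixed subset of $U$, each sentence ``there are more than $q$ elements satisfying $F$'' is decided by the complete theory of $(U,\calR)$, hence equivalent to $c=c$ or to $c\neq c$ for a constant $c$. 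If this set is infinite the formula is equivalent to $c=c$; if it is finite, its elements are named by constants and I obtain an explicit quantifier-free disjunction testing each named element against the constraints $C_r$. Either way a quantifier-free equivalent results.

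The delicate step, and the one I expect to demand the most care, is precisely this residual case. Unlike the anchor case it is not settled by the algebraic identities of the preceding Lemma alone: it relies on functionality to bound the number of forbidden witnesses and on completeness of the theory to decide the parameter-free counting sentences. All the remaining work is the routine bookkeeping of quantifier elimination together with the two equivalences supplied by the preceding Lemma.
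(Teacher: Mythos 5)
Your proof is correct and follows essentially the same route as the paper's: an anchor case in which a positive atom $R(x,t)$ together with closure under composition and inverse is used to push $x$ out of every other literal and then to discharge $\exists x.\,R(x,t)$ as $(R\circ R^{-1})(t,t)$, and a residual case with only fixed-point atoms, split on whether the set they define is finite (explicit disjunction over its named elements) or infinite (each remaining negated constraint forbids at most one witness, so the existential is true). The only cosmetic difference is that you treat equality literals directly by substitution and by the ``forbids one value'' argument, whereas the paper reduces $=$ to the partial identity $\Id_U\in\calR$ and invokes the equality-free case.
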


\begin{proof}
Let $\calU$ be the structure $(U,\calR)$.

\begin{enumerate}[label=(\arabic*)]
\item First we will show the following claim:

If $\calR$ is a set of partial bijection on $U$ that is closed under composition and inverse,
the theory of the structure $(U,\calR)$ without $=$ admits quantifier elimination.

Assume a quantifier-free formula $A$ of $L(\calU)$ is given.
Let $FV(A) = \{ x_1,\ldots,x_n \}$.
We assume $A$ is a disjunctive normal form.
We will find a quantifier-free formula $B$ of $L(\calU)$ that is equivalent to
$\exists x_n A$ in $\calU$.
Since $\exists x_n$ can be distributed over disjuncts,
we can assume $A$ does not contain disjunction.

First we replace $R(x_i,x_j)$ by $R^{-1}(x_j,x_i)$ in $A$ if $i>j$.
Then we can assume $i \le j$ for every $R(x_i,x_j)$ in $A$.

\begin{itemize}
\item 
Case 1. There is some positive $R_1(x_i,x_n)$ in $A$ such that $i<n$.

We replace every positive or negative $R_3(x_j,x_n)$, $R_4(x_n,x_j)$,
and $R_5(x_n,x_n)$ for $j < n$ except the atom $R_1(x_i,x_n)$ by
$R_1^{-1} \circ R_3(x_j,x_i)$,
$R_4 \circ R_1(x_i,x_j)$,
and $R_1^{-1} \circ R_5 \circ R_1(x_i,x_i)$ respectively.
Then atoms that contain $x_n$ is only the positive $R_1(x_i,x_n)$.
Let the result be $C \land R_1(x_i,x_n)$.
Then $\exists x_n A \lequiv C \land \exists x_n R_1(x_i,x_n)$.

Then $\exists x_n R_1(x_i,x_n) \lequiv R_1^{-1} \circ R_1(x_i,x_i)$.
Hence $\exists x_n A \lequiv C \land R_1^{-1} \circ R_1(x_i,x_i)$.
Take $B$ to be it.

\item
Case 2. Positive atoms that contain $x_n$ are only $R_1(x_n,x_n),\ldots,R_k(x_n,x_n)$.

Let $A$ be $A_1 \land R_1(x_n,x_n) \land \ldots \land R_k(x_n,x_n) \land 
\neg R'_1(x_n,x_n) \land \ldots \land \neg R'_m(x_n,x_n)\land
A_2$
where $k,m \ge 0$, $A_1$ does not contain $x_n$ and
each atom in $A_2$ is negative and contains both $x_n$ and $x_i$ for some $i<n$.

Let $X$ be $\{ x \in U  \mid  R_1(x,x), \ldots, R_k(x,x),
\neg R'_1(x,x), \ldots, \neg R'_m(x,x)  \}$.

\item
Case 2.1. $X$ is finite.

Let $X$ be $\{u_1,\ldots,u_m\}$.
Take $B$ to be $\Lor_{u \in X} A[x_n:=u]$.
Then $\exists x_n A \lequiv B$.

\item
Case 2.2. $X$ is infinite.

By moving $\exists x_n$ inside, 
$\exists x_n A \lequiv A_1 \land 
\exists x_n(R_1(x_n,x_n) \land \ldots \land R_k(x_n,x_n) 
\land \neg R'_1(x_n,x_n) \land \ldots \land \neg R'_m(x_n,x_n)
\land A_2)$.
Take $B$ to be $A_1$.
Then $\exists x_n A \lequiv B$, since we can show
$\exists x_n(R_1(x_n,x_n) \land \ldots \land R_k(x_n,x_n) 
\land \neg R'_1(x_n,x_n) \land \ldots \land \neg R'_m(x_n,x_n)
\land A_2)$ 
as follows.

Fix a negative atom $\neg R''(x_i,x_n)$ in $A_2$ where $i<n$.
Given $x_1,\ldots,x_{n-1}$, we have $|\{ x_n \in U  \mid R''(x_i,x_n)\}| \le 1$ since
$R''$ is a partial bijection.
Let $Y$ be $\{ x_n \in U  \mid  A_2 \}$.
Then $Y$ is cofinite.
Since $X$ is infinite, $X \cap Y \ne \emptyset$.
Hence
$\exists x_n(R_1(x_n,x_n) \land \ldots \land R_k(x_n,x_n) 
\land \neg R'_1(x_n,x_n) \land \ldots \land \neg R'_m(x_n,x_n)
\land A_2)$ 
is true.
\end{itemize}
\end{enumerate}

We have shown the claim.

\item Next we will show the theorem by using (1).

Assume a formula $A$ is given.
Let $\calR'$ be $\calR \cup \{ \Id_U \}$.
Then $\calR'$ is also closed under composition and inverse.
Let $A'$ be the formula obtained from $A$
by replacing $t_1=t_2$ by $\Id_U(t_1,t_2)$.
Then $A'$ is a formula in the theory of the structure $(U,\calR')$ without $=$.
By (1),
we have a quantifier-free formula $B'$ equivalent to $A'$ in 
the theory of the structure $(U,\calR')$.
Let $B$ be the formula obtained from $B'$
by replacing $\Id_U(t_1,t_2)$ by $t_1=t_2$.
Then $B$ is a quantifier-free formula in the theory of $\calU$ and $B \lequiv A$.
\end{proof}

\begin{thm}[Quantifier and Constant Elimination]\label{th:quantifier-constant-elim}\label{theorem:constant-elimination}
If $\calR$ is a set of partial bijection on $U$ that is closed under composition and inverse,
and $\Id_{\{u\}} \in \calR$ for every $u \in U$,
then 
in the theory of the structure $(U,\calR)$,
for any given formula,
there is some quantifier-free constant-free formula that is equivalent to the formula.
\end{thm}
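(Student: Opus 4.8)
The plan is to combine the quantifier-elimination Theorem~\ref{th:quantifier-elim} with a separate constant-elimination step made possible by the extra hypothesis $\Id_{\{u\}}\in\calR$. First I would apply Theorem~\ref{th:quantifier-elim} to the given formula $\varphi$: since $\calR$ is closed under composition and inverse, this yields a quantifier-free formula $A$, equivalent to $\varphi$ in $(U,\calR)$, whose atoms have the form $S(t_1,t_2)$ with $S\in\calR$, or $t_1=t_2$, where each $t_i$ is a variable or a constant $c\in U$. It then remains only to remove the constants from $A$ while staying quantifier-free. Crucially, this second step must \emph{not} re-invoke the quantifier-elimination procedure, because the latter can re-introduce constants in its finite case (the substitution $A[x_n:=u]$); keeping the two eliminations sequential rather than interleaved is exactly what makes the argument clean.

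The key observation is that, for a constant $c$, the singleton identity $\Id_{\{c\}}\in\calR$ is a \emph{predicate symbol}, not a constant, so any atom in which $c$ occurs only as a relation argument can be rewritten using $\Id_{\{c\}}$ without mentioning $c$ as a term. Concretely, I would process the atoms of $A$ one by one. An equality $x_i=c$ becomes $\Id_{\{c\}}(x_i,x_i)$. For a relation atom $S(x_i,c)$, set $T=\Id_{\{c\}}\circ S\in\calR$; a direct computation with the composition convention of the paper gives $T=\{(x,c)\mid(x,c)\in S\}$, so that $S(x_i,c)\Leftrightarrow x_i\in\dom(T)$, and by the equivalence $\exists w.\,T(x_i,w)\Leftrightarrow T^{-1}\circ T(x_i,x_i)$ proved in the Lemma above, this equals the constant-free atom $(T^{-1}\circ T)(x_i,x_i)$. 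Symmetrically, for $S(c,x_j)$ I set $T'=S\circ\Id_{\{c\}}\in\calR$, use $S(c,x_j)\Leftrightarrow x_j\in\range(T')$, and rewrite it as $(T'\circ T'^{-1})(x_j,x_j)$. All of $T$, $T'$, $T^{-1}\circ T$, $T'\circ T'^{-1}$ lie in $\calR$ by closure under composition and inverse, so each new atom is quantifier-free, constant-free, and expressed with a single predicate symbol of the language of $(U,\calR)$.

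The only remaining atoms are the closed ones, $S(c,c')$ and $c=c'$: these have a fixed truth value in the structure $(U,\calR)$, so I would replace each by $\top$ or $\bot$ (the trivially valid, resp.\ unsatisfiable, quantifier-free constant-free formula) and Boolean-simplify. Replacing every atom of $A$ in this way produces a quantifier-free constant-free formula $B$ with $B\Leftrightarrow A\Leftrightarrow\varphi$ in $(U,\calR)$, which is the desired conclusion.

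I expect the main obstacle to be this second step, and within it the correct encoding of a relation atom carrying one constant: one must see that $S(x_i,c)$ really defines the domain of the composite partial bijection $\Id_{\{c\}}\circ S$, and that this domain is captured on the diagonal by $(T^{-1}\circ T)(x_i,x_i)$. This is precisely where the two hypotheses ``$\Id_{\{u\}}\in\calR$'' and ``closure under composition and inverse'' are used jointly, together with the $\exists$-elimination Lemma; the handling of purely variable atoms and of closed atoms is then routine.
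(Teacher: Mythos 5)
Your proposal is correct and follows essentially the same route as the paper: apply Theorem~\ref{th:quantifier-elim} first, then remove constants atom by atom using the singleton identities, replacing closed atoms by their truth values. The only (immaterial) difference is that where the paper rewrites $S(x,c)$ directly as $\Id_{\{u_1\}}(x,x)$ for the unique $u_1$ with $S(u_1,c)$, you construct the same relation as $(T^{-1}\circ T)$ with $T=\Id_{\{c\}}\circ S$ via closure under composition and inverse.
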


\begin{proof}
Assume a formula $A$ is given.
Choose any variable $x$.
By the theorem \ref{th:quantifier-elim}, 
there is a quantifier-free formula $B$ that is equivalent to $A$.
In $B$, we replace $R(u_1,u_2)$ with constants $u_1,u_2$ by
$x=x$ if $R(u_1,u_2)$ is true,
and replace it by $\neg x=x$ if it is false.

In $B$, we replace $R(u,x)$ with constants $u$ by
$\Id_{\{u_1\}}(x,x)$ if $R(u,x)$ is equivalent to $x=u_1$,
and replace it by $\neg x=x$ if it is false.

In $B$, we replace $R(x,u)$ with constants $u$ in the same way as $R(u,x)$.

Let $C$ be the result.
Then $A \lequiv C$ and $C$ is a quantifier-free constant-free formula.
\end{proof}

\begin{exa}[Quantifier and constant elimination]
Our proof of the quantifier elimination results give an effective way
to transform any $A \in L(\setU)$ into some equivalent quantifier-free
$B \in L(\setR)$. We explain how this method works by
two examples. Assume $R_1, R_2, R_3 \in \setR$.
We will eliminate quantifiers in $\exists x_4.A$ 
for a given quantifier-free formula $A$ in the language $L(\setU)$,
by producing  some equivalent quantifier-free formula $B \in L(\setR)$.

\begin{itemize}
\item Example 1.
$A = (R_1(x_1,x_4) \land R_2(x_2,x_4) \land \neg R_3(x_3,x_4))$. 
First we can use $R_1(x_1,x_4)$ to eliminate $x_4$ in the other atoms,
since $R_1$ is a partial bijection.
Then $\exists x_4A$ is equivalent to
\[
\exists x_4(R_1(x_1,x_4) \land R_1^{-1} \circ R_2(x_2,x_1) \land 
\neg R_1^{-1} \circ R_3(x_3,x_1)).
\]
Next we move $\exists x_4$ inside and obtain an equivalent formula
\[
\exists x_4(R_1(x_1,x_4)) \land R_1^{-1} \circ R_2(x_2,x_1) \land 
\neg R_1^{-1} \circ R_3(x_3,x_1).
\]
Then we can replace $\exists x_4(R_1(x_1,x_4))$ by
$R_1^{-1} \circ R_1(x_1,x_1)$, since $R_1$ is a partial bijection,
to obtain an equivalent formula
\[
R_1^{-1} \circ R_1(x_1,x_1) \land R_1^{-1} \circ R_2(x_2,x_1) \land 
\neg R_1^{-1} \circ R_3(x_3,x_1),
\]
which we can take $B$ to be.
\item
Example 2.
$A = (R_1(x_1,x_3) \land R_2(x_4,x_4) \land \neg R_3(x_3,x_4))$. 
First we move $\exists x_4$ inside and obtain an equivalent formula
\[
R_1(x_1,x_3) \land \exists x_4(R_2(x_4,x_4) \land \neg R_3(x_3,x_4)).
\]
Let $X$ be $\{ x \in U  \mid  R_2(x,x) \}$.
We have two cases according to whether $X$ is finite or not.

\begin{itemize}
  \item
Case 1. $X$ is finite.

For example, we assume $X$ is $\{ u_1, u_2 \}$.
By replacing existential quantification by disjunction we obtain an equivalent formula
\[
R_1(x_1,x_3) \land (\neg R_3(x_3,u_1) \lor \neg R_3(x_3,u_2)).
\]
Since $R_3$ is a partial bijection,
$R_3(x_3,u)$ is equivalent to false or $x_3=u'$ for some $u' \in U$.
For example, we assume $R_3(x_3,u_1)$ is equivalent to $x_3=u_1'$
and
$R_3(x_3,u_2)$ is equivalent to $x_3=u_2'$.
Since $x_3=u_i'$ is equivalent to $\Id_{u_i'}(x_3,x_3)$ for $i=1,2$,
we can take $B$ to an equivalent formula
\[
R_1(x_1,x_3) \land (\neg \Id_{u_1'}(x_3,x_3) \lor \neg \Id_{u_2'}(x_3,x_3)).
\]

\item
Case 2. $X$ is infinite.

For a given $u_3 \in U$,
$|\{ x_4 \in U  \mid  \neg R_3(u_3,x_4) \}|$ is cofinite,
since $R_3$ is a partial bijection.
Since $X$ is infinite, $\exists x_4(R_2(x_4,x_4) \land \neg R_3(x_3,x_4))$
is true.
Hence we can take $B$ to be an equivalent formula
\[
R_1(x_1,x_3).
\]
\end{itemize}
\end{itemize}
\end{exa}

\section{Main Theorem}
\label{section:main}

In this section we prove that the statement $2$-Hydra is a
counterexample to the Brotherston conjecture. Assume $\setM$ is the
structure for the language $\Sigma_N$ with universe $\universe{\setM}$
defined in section \ref{section:counter-model}. 
Recall that $L(\setR)$ denoted the language having
a predicate symbol for each $R \in \setR$, where we identify $R$ and
the symbol denoting $R$. We write $L(\setM)$ for the language 
generated from 
$\Sigma_N \cup \universe{\setM}$. We consider the equality predicate $=$ a part
of any first order language.

We write $\calU$ for the
structure $(|\calM|, 0_\calM, s_\calM, N_\calM, p_\calM, \calR)$.
We write $L(\calU)$ for the first-order language generated from
$|\calM|, 0, s, N, p, \calR$.

\begin{lem}[Translation from $L(\calU)$ to $L(\calR)$]
\label{lemma:atomic-formulas}

Let $\setR$ be that introduced in section \ref{section:partial-bijections}.
Then all
atomic formulas $A \in L(\calU)$ are equivalent to some
formula $B \in L(\calR)$ in $\calU$.
\end{lem}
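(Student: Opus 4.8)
The plan is to translate atomic formulas one predicate at a time, first disposing of $\Npredicate$ and $p$, and then eliminating the function symbol $\s$ and the constants by introducing fresh existential variables tied down by relations of $\setR$. Note that $L(\calR)$ is allowed to contain quantifiers, which makes this last step clean. Recall first that every term of $L(\calU)$ has the normal form $\s^k(v)$ with $k\in\Nat$ and $v$ a variable or a constant, and that a closed term $\s^k(v)$ denotes a fixed element of $\universe{\setM}$. Since $\Npredicate_\setM=\universe{\setM}$, an atom $\Npredicate(t)$ is true under every valuation, hence equivalent in $\calU$ to $y=y$ for any variable $y$; and since $p_\setM=\universe{\setM}^2\setminus R_0$ with $R_0\in\setR$, an atom $p(t_1,t_2)$ is equivalent to $\neg R_0(t_1,t_2)$. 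So it remains to translate atoms $R(t_1,t_2)$ with $R\in\setR$ and equalities $t_1=t_2$.

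Next I would record two building blocks definable in $L(\calR)$. For a constant $u\in\universe{\setM}$ the singleton $\{u\}$ is finite, so $\{u\}\in\setD$ by Lemma~\ref{lemma:D}(1) and $\id_{\{u\}}\in\setR$ by Lemma~\ref{lemma:partial-bijections}(1); hence ``$w=u$'' is expressed by $\id_{\{u\}}(w,w)$. The key block is the graph $G_k=\{(a,\s^k a)\mid a\in\universe{\setM}\}$ of $\s^k$. This is $\phi_\setM$ for the \emph{even} map $\phi(x)=x+k$, but $G_k$ is \emph{not} itself in $\setR$: the map $\phi$ fails to be sign-preserving at the finitely many range points $\zeroZ-1,\ldots,\zeroZ-k$, whose $\s^k$-preimages are $\zeroZ-(k+1),\ldots,\zeroZ-2k$. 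Deleting these finitely many points from the domain gives a restriction $S_k$ of $G_k$ whose domain and range are cofinite in $\universe{\setM}$, hence $\sim\universe{\setM}$ and so in $\setD$ by Lemma~\ref{lemma:D}(6), and on whose range $\phi$ now \emph{is} sign-preserving; therefore $S_k\in\setR$. Since $G_k=S_k\cup\{(\zeroZ-(k+j),\zeroZ-j)\mid 1\le j\le k\}$, membership $(w,z)\in G_k$ is expressed in $L(\calR)$ by
\[
S_k(w,z)\ \lor\ \bigvee_{j=1}^{k}\bigl(\id_{\{\zeroZ-(k+j)\}}(w,w)\wedge\id_{\{\zeroZ-j\}}(z,z)\bigr).
\]

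With these blocks the assembly is routine. Writing $t_i=\s^{k_i}(v_i)$, I would translate $R(t_1,t_2)$, with fresh distinct variables $a_1,a_2$, as
\[
\exists a_1\exists a_2\bigl(\theta_1(v_1,a_1)\wedge\theta_2(v_2,a_2)\wedge R(a_1,a_2)\bigr),
\]
where $\theta_i(v_i,a_i)$ is the $L(\calR)$-formula for $G_{k_i}(v_i,a_i)$ when $v_i$ is a variable, and is $\id_{\{u\}}(a_i,a_i)$ (with $u$ the value of the closed term $\s^{k_i}(v_i)$) when $v_i$ is a constant. All subformulas lie in $L(\calR)$, so the whole formula does. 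An equality $t_1=t_2$ is handled identically, replacing the last conjunct $R(a_1,a_2)$ by the equality atom $a_1=a_2$; this uniform template also correctly covers the case where $t_1,t_2$ share a variable. This produces, for every atomic $A\in L(\calU)$, an equivalent $B\in L(\calR)$.

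The only genuinely non-formal step, and thus the main obstacle, is the definability of the successor graph $G_k$. One must pin down \emph{exactly} which finitely many points near the junction of $\NatModel$, $\ZetaModelMinus$, $\ZetaModelPlus$ make $\phi(x)=x+k$ fail sign-preservation, verify that removing their preimages leaves a domain and range still in $\setD$ (because they are cofinite), and check that the deleted pairs are recovered by singleton identities. Everything else reduces to the two structural identities $\Npredicate_\setM=\universe{\setM}$ and $p_\setM=\universe{\setM}^2\setminus R_0$ together with the membership facts $R_0\in\setR$, $\id_{\{u\}}\in\setR$ and $S_k\in\setR$; no use of the composition/inverse closure of $\setR$ is needed for this particular lemma, since the existential template keeps each relation symbol unmodified.
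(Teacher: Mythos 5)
Your proposal is correct and takes essentially the same route as the paper's own proof: normalize every term to $\s^k(v)$, translate closed terms via singleton identities $\id_{\{u\}}$, translate $x=\s^k(y)$ via the shift relation given by $\phi(x)=x+k$, translate $N(t)$ to a trivially true atom and $p$ via the complement of $R_0$, and wrap compound atoms in fresh existentially quantified variables. You are in fact slightly more careful than the paper on one point: the paper takes the shift relation with domain all of $\universe{\setM}$ to be in $\setR$ outright, glossing over the failure of sign-preservation of $\phi(x)=x+k$ at the finitely many points $\zeroZ-1,\ldots,\zeroZ-k$, whereas you excise the offending points and recover the finitely many deleted pairs by singleton identities, which is a legitimate (and arguably necessary) repair.
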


\begin{proof} 
Translation is defined to be homomorphic with logical connectives.
\begin{itemize}
\item
$N(t)$ is translated into true.
\item
$x=y$ is translated into $\Id_M(x,y)$.
\item
$x = s^n(0)$ is translated into $\Id_{\{(1,n)\}}(x,x)$.
\item
$x = s^n((i,a))$ for $(i,a) \in M$ is translated into $\Id_{\{(i,a+n)\}}(x,x)$.
\item
$x=s^n(y)$ is translated into $R(y,x)$ where
$R = (|\calM|, |\calM| - \{0,1,\ldots,n-1\}, \phi)$ and $\phi(x)=x+n$.
\item
$t_1=t_2$ is first translated into 
$\exists xy(x=t_1 \land y=t_2 \land x=y)$,
then translated into some formula by using above translation.
\item
$p(t_1,t_2)$ is first translated into 
$\exists xy(x=t_1 \land y=t_2 \land p(x,y))$,
then translated into some formula by using above translation.
\item
$R(t_1,t_2)$ is first translated into 
$\exists xy(x=t_1 \land y=t_2 \land R(x,y))$ for $R \in \calR$.
then translated into some formula by using above translation.\qedhere
\end{itemize}
\end{proof}

\begin{lem}\label{lemma:Rxx}
If $X=\{ x \in M \ |\ R(x,x) \}$ for some $R \in \calR$,
then $X \in \calD$.
\end{lem}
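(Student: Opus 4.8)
The plan is to exploit the explicit description of the elements of $\calR$. Since $R \in \calR$, by Definition~\ref{definition:bijections} it is a $(D,E,\phi)$-bijection for some $D,E \in \calD$ and some $\phi \in \setF$, say $\phi(x) = 2^z x + r$ with $z \in \Zeta$, $r \in \Rational$. Because $\dom(R) = D$ and $R(a,b) \Leftrightarrow b = \phi_\calM(a)$ for every $a \in D$, the set $X$ is exactly the fixed-point set of $\phi_\calM$ inside $D$, that is $X = \{x \in D \mid \phi_\calM(x) = x\}$. So the whole problem reduces to understanding the fixed points of the map $\phi_\calM$ on $M = \Nat + \Zeta$.

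The key observation is that a non-identity affine map $x \mapsto 2^z x + r$ on $\Rational$ has at most one fixed point: the equation $2^z t + r = t$ is $(2^z - 1)t = -r$, which has a unique rational solution when $z \neq 0$, no solution when $z = 0$ and $r \neq 0$, and is satisfied by every $t$ exactly when $z = 0$ and $r = 0$ (i.e.\ when $\phi = \id$). I would then push this through the definition of $\phi_\calM$, splitting on the parity of $z$.

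In the even case $\phi_\calM$ preserves the component of $M$, so $\phi_\calM((i,r')) = (i,\phi(r'))$ and a fixed point forces $\phi(r') = r'$. If $\phi = \id$ this holds for all $x$, so $X = D \in \calD$ directly; otherwise at most one value of $r'$ qualifies, yielding at most two points of $M$ (one per component), so $X$ is finite. In the odd case $\phi_\calM$ moves $\NatModel$ and $\ZetaModelPlus$ into the other component and fixes the component only on $\ZetaModelMinus$; hence a fixed point can occur only inside $\ZetaModelMinus$, where again $\phi(r') = r'$ has at most one solution, so $X$ is finite. In every case $X$ is either all of $D$ or a finite set: the former is in $\calD$ by definition of a $(D,E,\phi)$-bijection, the latter by Lemma~\ref{lemma:D}, point~$1$, which puts all finite subsets of $M$ in $\calD$.

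I expect the only genuine bookkeeping — and hence the main (mild) obstacle — to be the odd case, where $\phi_\calM$ is given by a sign-dependent split that swaps the two components of $M$; one has to verify carefully that a fixed point survives only in the single region $\ZetaModelMinus$ where $\phi_\calM$ stays within its own component. Once this is settled, finiteness of the fixed-point set is immediate from the at-most-one-fixed-point property of non-identity affine maps, and the conclusion $X \in \calD$ follows at once.
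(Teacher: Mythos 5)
Your proposal is correct and follows essentially the same route as the paper: split on whether $\phi$ is the identity (giving $X=D\in\calD$) and otherwise observe that a non-identity affine map has at most one fixed point, so $X$ is finite and lies in $\calD$ by Lemma~\ref{lemma:D}(1). Your treatment of the odd case, where $\phi_\calM$ swaps components outside $\ZetaModelMinus$, is more explicit than the paper's one-line "intersection of two lines" justification, but the underlying argument is the same.
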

\begin{proof}
Let $R$ be $(A,B,\phi)$ where $\phi(x)=2^ax+b$.
If $(a,b)=(0,0)$, then $X=A$ and $X \in \calD$.
If $(a,b) \ne (0,0)$, $|X| \le 2$ since
$X$ is the intersection of the lines 
$\{ (x,y) \in M^2 \ |\ y=\phi(x), x \in A \}$
and the line $\{ (x,y) \in M^2 \ |\ x=y \}$.
Hence $X \in \calD$.
\end{proof}

\begin{thm}[Counterexample to the Brotherston-Simpson Conjecture]
\label{theorem:main}
Let $H$ be the formula defined in  Definition \ref{definition:H}.
Then $H$ has 
a proof in $\CLKID(\Sigma_\Npredicate, \Phi_\Npredicate)$, and 
no proof in $\LKID(\Sigma_\Npredicate, \Phi_\Npredicate) + (0,\s)$-axioms. 
\end{thm}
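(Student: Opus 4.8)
The first claim, that $H$ is provable in $\CLKID(\Sigma_\Npredicate,\Phi_\Npredicate)$, is exactly Theorem~\ref{theorem:cyclic}, so I would simply cite it. All the substance is in the second claim, and the plan is to exhibit $(\setM,\setH_\setM)$ as a Henkin model of $\LKID(\Sigma_\Npredicate,\Phi_\Npredicate)$ in which the $(0,\s)$-axioms hold but $H$ fails; unprovability then follows from Henkin soundness of $\LKID$ \cite{Brotherston11} by contraposition. Since $\setM$ satisfies the $(0,\s)$-axioms by construction (section~\ref{subsection:counter-model}) and $\setM\not\models H$ by Lemma~\ref{lemma:hydra}, the only nontrivial point is that $(\setM,\setH_\setM)$ really is a Henkin model.

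To prove this I would follow the outline of section~\ref{section:counter-model} and reduce the Henkin-model condition to the induction schema for $\Npredicate$: as $\setH_\setM$ is already a Henkin class (it is the family of definable sets) and $\Npredicate_\setM=\universe{\setM}$, it suffices to show that every definable subset $X\in\setH_1$ satisfies the induction schema for $\Npredicate$ (definition~\ref{definition-induction-schema}). The key step is that every such $X$ lies in $\setD$. First I would record that $\setR$ is closed under composition and inverse (Lemma~\ref{lemma:partial-bijections}, points~3 and~4) and that $\Id_{\{u\}}\in\setR$ for every $u\in\universe{\setM}$ (Lemma~\ref{lemma:partial-bijections}, point~1 applied to the singleton $\{u\}$, which is finite and hence in $\setD$ by Lemma~\ref{lemma:D}, point~1). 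Therefore the hypotheses of Theorem~\ref{th:quantifier-constant-elim} are met by the structure $\calU$.

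Next I would fix a definable subset of $\setM$, presented as $\{u\mid\setM\models F[u,\vec v]\}$ for a formula $F$ with one free variable and parameters $\vec v\in\universe{\setM}$. Since every $\setM$-definable set is $\calU$-definable, I view $F$ as a formula in the theory of $\calU$, the parameters becoming constants. By Theorem~\ref{th:quantifier-constant-elim}, $F$ is equivalent in $\calU$ to a quantifier-free, constant-free formula $B$; with a single free variable $x$ and no constants, the only atoms available are $x=x$ and $R(x,x)$ for $R\in\setR$, so $B$ is a Boolean combination of such atoms. By Lemma~\ref{lemma:Rxx} each set $\{x\mid R(x,x)\}$ is in $\setD$, and $\setD$ is closed under complement and finite union (Lemma~\ref{lemma:D}, point~6), hence under all Boolean operations; thus the definable set $X$ is in $\setD$. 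By Lemma~\ref{lemma:D}, point~5, every member of $\setD$ satisfies the induction schema for $\Npredicate$, so $\setM$ satisfies the induction schema and $(\setM,\setH_\setM)$ is a Henkin model of $\LKID(\Sigma_\Npredicate,\Phi_\Npredicate)$.

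Finally I would assemble the pieces: were $H$ provable from the $(0,\s)$-axioms in $\LKID(\Sigma_\Npredicate,\Phi_\Npredicate)$, then by Henkin soundness $H$ would hold in every Henkin model of $\LKID(\Sigma_\Npredicate,\Phi_\Npredicate)$ satisfying the $(0,\s)$-axioms, contradicting the facts that $(\setM,\setH_\setM)$ is such a model and $\setM\not\models H$. The conceptually hard part, quantifier and constant elimination, is already packaged in Theorem~\ref{th:quantifier-constant-elim}; the main delicate point remaining in this assembly is the passage from $\setM$-definability to $\calU$-definability with the parameters $\vec v$ treated as constants, which is precisely why constant elimination rather than mere quantifier elimination is required, together with the observation that a constant-free formula in a single variable can only be built from the self-loop atoms $R(x,x)$.
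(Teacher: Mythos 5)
Your overall strategy coincides with the paper's: cite Theorem~\ref{theorem:cyclic} for provability in $\CLKID$, reduce the Henkin-model condition for $(\setM,\setH_\setM)$ to the induction schema for $\Npredicate$, show every definable subset lies in $\setD$ via quantifier and constant elimination plus Lemma~\ref{lemma:Rxx} and the closure properties of $\setD$, and conclude by Henkin soundness. The verification that the hypotheses of Theorem~\ref{th:quantifier-constant-elim} hold (closure of $\setR$ under composition and inverse, singleton identities via Lemma~\ref{lemma:D}) is exactly as in the paper.

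There is, however, one genuine gap: the sentence ``Since every $\setM$-definable set is $\calU$-definable, I view $F$ as a formula in the theory of $\calU$'' and the immediately following invocation of Theorem~\ref{th:quantifier-constant-elim}. That theorem is stated and proved only for pure partial-bijection structures $(U,\calR)$, whose language contains nothing but the binary predicate symbols for relations in $\calR$, equality and constants. A formula $F\in L(\setM)$ still contains the function symbol $\s$, the constant $0$ and the predicates $\Npredicate$ and $p$, so the theorem does not apply to it as it stands, and the subsequent claim that the resulting quantifier-free constant-free $B$ is a Boolean combination of atoms $x=x$ and $R(x,x)$ with $R\in\setR$ would fail if $B$ could still contain $\s$-terms or $p$-atoms. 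What is needed, and what the paper supplies as Lemma~\ref{lemma:atomic-formulas} together with an induction on formulas, is a translation of every atomic formula of $L(\calU)$ into an equivalent formula of $L(\setR)$: e.g.\ $N(t)$ becomes true, $x=\s^n(y)$ becomes $R(y,x)$ for the constant-addition bijection $R\in\setR$, and $p(x,y)$ is handled through its complement $R_0\in\setR$. This step is not routine bookkeeping; it is precisely where the specific design of $\setR$ (that it contains the shift relations and $R_0$) is used, and without it the reduction from $\setM$-definability to $L(\setR)$-definability is unjustified. The remaining small technicality you elide --- that after constant elimination one may have to rename the auxiliary variable introduced for the true/false atoms back to $x$ (the paper's passage from $B$ to $B'$) --- is harmless.
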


\begin{proof}%[Theorem \ref{theorem:main}]
The proof of $H$ in $\CLKID$ is shown in Theorem \ref{theorem:cyclic}. The non-provability of $H$ in $\LKID(\Sigma_\Npredicate, \Phi_\Npredicate) + (0,\s)$-axioms is shown as follows. Let $\setM$ be the structure defined in section \ref{subsection:counter-model}: $\setM$ falsifies $H$ by Lemma \ref{lemma:hydra}.

We defined $\setH_\setM$ as the Henkin family of definable sets in
$\setM$ (definition \ref{definition-definable-predicates}).  What is
left to be proved is: $(\setM, \setH_\setM)$ is a Henkin model of
$\LKID(\Sigma_\Npredicate, \Phi_\Npredicate)$. By definition of Henkin
model we have to prove that any prefixed point 
of the monotone operator $\phi_{\Phi_N}$ 
with the restriction in $\setH_\setM$ 
includes the
interpretation of $N$. As we already pointed out, since the
interpretation of $N$ is $\setM$ itself, this is to say that \emph{any
set in $\calH_1$ of $\setH_\setM$ which is closed under $0$ and $\s$ is
equal to $\setM$}.

Let $\setR, \setD$ be the set of relations and domains defined in
Def. \ref{definition:D}, \ref{definition:bijections}, and let $\setU =
(\universe{\setM}, \setR)$ be the partial bijection structure defined
by $\setR$. By induction on the formula we prove that all formulas $A
\in L(\setM)$ are equivalent in $\setU$ to some formula 
$B \in L(\setR)$: in the case $A$ is an atomic formula we use Lemma
\ref{lemma:atomic-formulas}, in the cases $A = \neg A_1, A_1 \vee A_2,
\exists x. A_1$ the induction hypothesis on $A_1, A_2$.

$\setD$ includes all singletons and it is closed under complement and
finite union by Lemma \ref{lemma:D}. $\setR$ is closed under
composition and inverse by Lemma \ref{lemma:partial-bijections},
points $3,4$, By Theorem \ref{theorem:constant-elimination}, each $A
\in L(\setU)$ having exactly one free variable $x$ is equivalent in
$\setU$ to some quantifier-free $B \in L(\setR)$. By replacing by $x$
each variable $y \not = x$ in $B$ we obtain a formula $B'$ equivalent
to $A$, which is a boolean combination of atoms of the form $R(x, x)$
for some $R \in \setR$, or the form $x=x$.  Assume an atom of the form
$R(x,x)$.  Then $D = \{x \in |\calM|^2 \mid R(x, x)\}$ is in $\calD$
by Lemma \ref{lemma:Rxx}.  Assume an atom of the form $x=x$. Then $\{x
\in U \mid x=x\}$ is $U \in \setD$. Thus, each atom $R(x, x)$ in $B'$
is equivalent to $x \in D$ for some $D \in \setD$. By the closure
properties for $\setD$ we deduce that $B'$ defines some set $D'$ in
$\setD$.  By Lemma \ref{lemma:D}, point $4$, $D'$ has
a dyadic measure, and by Lemma \ref{lemma:measure} 
$D'$ satisfies the induction schema.
Hence
$(\setM,\setH_\setM)$ is a Henkin
model of $\LKID(\Sigma_\Npredicate, \Phi_\Npredicate)$, as we wished
to show.
\end{proof}

\subsection{Non-Conservativity of the Inductive Definition System $\LKID$}
\label{subsection:non-conservativity}

This subsection shows a side result: non-conservativity of $\LKID$
with respect to additional inductive predicates, by giving a counterexample.

In the standard model, the truth of formula does not change
when we extend the model with inductive predicates that do not appear in the formula.
On the other hand, this is not the case for provability in
the inductive definition system $\LKID$.
Namely, a system may change the provability of a formula
even when we add inductive predicates that do not appear in the formula.
Namely, for a given system,
the system with additional inductive predicates may not be
conservative over the original system.

\begin{thm}[Non-Conservativity]\label{th:nonconservative}
There are $\Sigma_1,\Phi_1,\Sigma_2,\Phi_2$ such that
$\LKID(\Sigma_2,\Phi_2)$ is an extension of $\LKID(\Sigma_1,\Phi_1)$ and
$\LKID(\Sigma_2,\Phi_2)$ is not conservative over $\LKID(\Sigma_1,\Phi_1)$.
\end{thm}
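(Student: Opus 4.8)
The plan is to reuse the 2-Hydra statement $H$ together with the model $\setM$, exploiting the contrast between Theorem~\ref{theorem:hydra-order} (where the inductive predicate $\le$ makes $H$ provable from the $\zeroaxiom$) and the proof of Theorem~\ref{theorem:main} (where, without $\le$, $H$ fails in the Henkin model $(\setM,\setH_\setM)$). Concretely, I would set $\Sigma_1 = \Sigma_\Npredicate$ and $\Phi_1 = \Phi_\Npredicate$, and $\Sigma_2 = \Sigma_\Npredicate + \{\le\}$ and $\Phi_2 = \Phi_\Npredicate + \Phi_{\le}$. Since $\Sigma_1 \subseteq \Sigma_2$ and $\Phi_1 \subseteq \Phi_2$, every inference rule of $\LKID(\Sigma_1,\Phi_1)$ is also a rule of $\LKID(\Sigma_2,\Phi_2)$ (we only adjoin the introduction rules for $\le$), so $\LKID(\Sigma_2,\Phi_2)$ is an extension of $\LKID(\Sigma_1,\Phi_1)$, which settles the first half of the statement.

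For the separating formula I would take $G = (\zeroaxiom \implies H)$. Note that $G$ mentions only $0, \s, p, \Npredicate$, hence $G$ is a formula of the language of $\Sigma_1$: this is essential, since conservativity concerns precisely formulas in the smaller language. The provability of $G$ in the larger system is immediate from Theorem~\ref{theorem:hydra-order}, which provides a proof of the sequent $\zeroaxiom \vdash H$ in $\LKID(\Sigma_2,\Phi_2)$; a single application of $(\rightarrow R)$ turns this into a proof of $\vdash \zeroaxiom \implies H$, that is, of $G$. No deduction-theorem machinery is needed here, since $\zeroaxiom$ is a sentence and $(\rightarrow R)$ carries no side condition.

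It remains to show $\LKID(\Sigma_1,\Phi_1) \not\vdash G$, and here I would invoke Henkin soundness. By the proof of Theorem~\ref{theorem:main}, $(\setM,\setH_\setM)$ is a Henkin model of $\LKID(\Sigma_1,\Phi_1)$. In $\setM$ the $(0,\s)$-axioms hold by construction, so in particular $\setM \models \zeroaxiom$; on the other hand $\setM \not\models H$ by Lemma~\ref{lemma:hydra}. Therefore $\setM \not\models (\zeroaxiom \implies H)$, i.e.\ $G$ is false in the Henkin model $(\setM,\setH_\setM)$. By the Henkin soundness part of the soundness-and-completeness theorem for $\LKID$, any formula provable in $\LKID(\Sigma_1,\Phi_1)$ is true in every Henkin model, so $G$ is not provable in $\LKID(\Sigma_1,\Phi_1)$. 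Combining the two halves, $\LKID(\Sigma_2,\Phi_2)$ proves a formula of the language of $\Sigma_1$ that $\LKID(\Sigma_1,\Phi_1)$ does not, which is exactly non-conservativity.

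The step that most deserves care — though it is not really an obstacle, since the substantive analytic work of verifying that $(\setM,\setH_\setM)$ is a Henkin model and that $\setM\not\models H$ was already carried out — is ensuring that $G$ genuinely lives in the language of $\Sigma_1$, and that the two uses of $\setM$ are coherent: the very model that falsifies $H$ while validating $\zeroaxiom$ must be the one shown to be a Henkin model of the \emph{smaller} system $\LKID(\Sigma_1,\Phi_1)$. Both are guaranteed by the earlier sections, so the argument reduces to assembling Theorem~\ref{theorem:hydra-order}, Lemma~\ref{lemma:hydra}, and Theorem~\ref{theorem:main}.
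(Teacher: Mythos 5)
Your proposal is correct and follows essentially the same route as the paper: the same choice of $\Sigma_1,\Phi_1,\Sigma_2,\Phi_2$, provability of $\zeroaxiom \vdash H$ in the extended system via Theorem~\ref{theorem:hydra-order}, and unprovability in the base system via the Henkin model $(\setM,\setH_\setM)$ underlying Theorem~\ref{theorem:main}. The only cosmetic difference is that you internalize the sequent as the implication $\zeroaxiom \implies H$ and spell out the appeal to Henkin soundness, where the paper simply cites Theorem~\ref{theorem:main} directly.
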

\begin{proof}
Take $\Sigma_1$ to be $0,s,N,p$ and $\Phi_1$ to be $\Phi_N$ and
$\Sigma_2$ to be $\Sigma_1 \cup \{ \le \}$ and $\Phi_2$ to be $\Phi_1$ with
the production rules for $\le$.
Then 
the sequent $\zeroaxiom \vdash H$ is in the language of
$\LKID(\Sigma_1,\Phi_1)$ but it is not provable in $\LKID(\Sigma_1,\Phi_1)$
by Theorems \ref{theorem:main},
while it is provable in $\LKID(\Sigma_2,\Phi_2)$
by Theorem \ref{theorem:hydra-order}.
\end{proof}

\section{Conclusion}
\label{section:conclusion}
We have proved in Thm. \ref{theorem:main} that $\CLKID$, the formal system
of cyclic proofs \cite{Brotherston11} proves strictly more than
$\LKID$, Martin-L\"{o}f's formal system of inductive definitions with
classical logic. This settles an open question given in \cite{Brotherston11}. 
In Theorem \ref{th:nonconservative}, by the same counterexample
we also shows that if we add more inductive predicates to
$\LKID$ we may obtain a non-conservative extension of $\LKID$.

\subsection*{Acknowledgments}
This is partially supported by Core-to-Core Program (A. Advanced
Research Networks) of the Japan Society for the Promotion of Science.


\begin{thebibliography}{10}

\bibitem{Berardi-Makoto-FOSSACS2017}
Stefano Berardi, Makoto Tatsuta, The Classic Martin-L\"of's System of Inductive Definitions is not Equivalent to Cyclic Proofs, FOSSACS 2017, 301--317.
 
\bibitem{Brotherston05}
James Brotherston,
Cyclic Proofs for First-Order Logic with Inductive Definitions,
In: {\em Proceedings of TABLEAUX 2005}, 2005. 

\bibitem{Brotherston-phd}
James Brotherston,
Sequent calculus proof systems
for inductive definitions,
phd. thesis, Laboratory for Foundations of Computer Science
School of Informatics
University of Edinburgh
2006.

\bibitem{Brotherston08}
James Brotherston, Richard Bornat and Cristiano Calcagno,
Cyclic Proofs of Program Termination in Separation Logic,
In: {\em Proceedings of POPL 2008}, 2008. 

\bibitem{Brotherston11a}
James Brotherston, Dino Distefano and Rasmus L. Petersen,
Automated Cyclic Entailment Proofs in Separation Logic,
In: {\em  Proceedings of CADE-23}, 2011. 

\bibitem{Brotherston11}
J. Brotherston, A Simpson,
Sequent calculi for induction and infinite descent,
{\em Journal of Logic and Computation 21} (6) (2011) 1177--1216.

\bibitem{Brotherston12}
James Brotherston, Nikos Gorogiannis and Rasmus L. Petersen,
A Generic Cyclic Theorem Prover,
In: {\em Proceedings of APLAS 2012}, 2012.

\bibitem{Enderton00}
H. B. Enderton,
{\em A Mathematical Introduction to Logic, Second Edition},
Academic Press, 2000.

\bibitem{Kirby-1982}
Laurence Kirby and Jeff Paris, Accessible Independence Results for Peano Arithmetic, Bulletin of London Mathematical Society, 1982; 14: 285-293.

\bibitem{Martin-Lof-1971}
P. Martin-L\"{o}f. Hauptsatz for the intuitionistic theory of iterated inductive definitions. In Proceedings of the Second Scandinavian Logic Symposium, pp. 179--216. North-Holland,
1971.

\bibitem{Simpson-2017}
Alex Simpson. Cyclic Arithmetic is Equivalent to Peano Arithmetic. Proceedings of Fossacs 2017.

\bibitem{Stratulat-2016}
Sorin Stratulat, Structural vs. cyclic induction: a report on some experiments with Coq, 2016, SYNASC 2016: Proceedings of the 18th International Symposium on Symbolic and Numeric Algorithms for Scientific Computing, p. 27-34, IEEE Computer Society
\end{thebibliography}
\end{document}